\documentclass[xcolor=svgnames,11pt,reqno]{amsart}
\usepackage[T1]{fontenc}
\usepackage[round]{natbib}
\usepackage{amsmath}
\usepackage{amssymb}
\usepackage{amsthm}
\usepackage{fancyhdr}
\usepackage{enumitem}
\usepackage{comment}
\usepackage{relsize}
\usepackage{nicefrac}
\usepackage{bbm}
\usepackage{mathrsfs}
\usepackage{graphicx}
\usepackage[utf8]{inputenc}
\usepackage{cancel}
\usepackage{mathtools}
\usepackage{tikz,xcolor}
\usepackage{dirtytalk}
\usetikzlibrary{cd,decorations.pathreplacing,matrix,arrows,positioning,automata,shapes,shadows,calc,fadings,decorations,through,intersections}
\usepackage{float}
\usepackage{subcaption}
\usepackage{multirow}
\usepackage[left=2.5cm,top=2.8cm,right=2.5cm,bottom=2.5cm,head=3cm,headsep=1cm, foot=3cm]{geometry}
\usepackage{hyperref}
\usepackage[skip=2mm,indent=12pt]{parskip}
\definecolor{darkblue}{rgb}{0.1,0.1,0.9}
\definecolor{darkred}{rgb}{0.9,0.1,0.1}

\hypersetup{colorlinks, allcolors= darkblue}

\newtheorem{theorem}{Theorem}[section]
\newtheorem{definition}[theorem]{Definition}
\newtheorem{proposition}[theorem]{Proposition}
\newtheorem{assumption}[theorem]{Assumption}
\newtheorem{corollary}[theorem]{Corollary}
\newtheorem{lemma}[theorem]{Lemma}
\newtheorem{remark}[theorem]{Remark}

\newcommand{\p}{\mathbb{P}}

\newcommand{\cF}{\mathcal{F}}

\newcommand{\cI}{\mathcal{I}}
\newcommand{\cB}{\mathcal{B}}
\newcommand{\cA}{\mathcal{A}}

\newcommand{\cQ}{\mathcal{Q}}

\newcommand{\bbR}{\mathbb{R}}

\renewcommand{\tilde}{\widetilde}

\newcommand{\E}{\mathbb{E}}

\newcommand{\One}{\mathbbm{1}}

\renewcommand{\tilde}{\widetilde}

\DeclareMathOperator*{\argmax}{\arg\max}

\def\ssd{\succcurlyeq_{\hspace{- 0.4 mm} _{SSD}}}

\usepackage{booktabs}
\usepackage{siunitx}
\usepackage[colorinlistoftodos,textsize=small]{todonotes}
\usepackage{cleveref}
\usepackage{caption}
\usepackage{anyfontsize}
\usepackage{yhmath}
\usepackage{comment}

\allowdisplaybreaks
\makeatletter

\newcommand{\Rmnum}[1]{\expandafter\@slowromancap\romannumeral #1@}

\title[Stackelberg Equilibria in Monopoly Insurance Markets with Probability Weighting]{Stackelberg Equilibria in Monopoly Insurance Markets\vspace{0.2cm}\\with Probability Weighting\vspace{0.2cm}}

\thanks{\textit{JEL Classification:} C61; C62; C72; C79; D86; G22.\vspace{0.2cm}}

\thanks{\textit{Key Words and Phrases:} Distortion risk measure; Distortion premium principle; Probability weighting; Stackelberg equilibria; Bowley optima; Pareto optima. \vspace{0.2cm}}

\thanks{We are grateful to two anonymous reviewers for comments. Mario Ghossoub acknowledges financial support from the Natural Sciences and Engineering Research Council of Canada (NSERC Grant No.\ 2024-03744). Bin Li acknowledges financial support from (NSERC Grant No.\ 2020-04338). Benxuan Shi acknowledges financial support from the Society of Actuaries through the Hickman Scholars Program.}

\begin{document}

\author[Maria Andraos, Mario Ghossoub, Bin Li, and Benxuan Shi]
{Maria Andraos\vspace{0.1cm}\\ University of Waterloo\vspace{0.6cm}\\
Mario Ghossoub\vspace{0.1cm}\\ University of Waterloo\vspace{0.6cm}\\
Bin Li\vspace{0.1cm}\\ University of Waterloo\vspace{0.6cm}\\
Benxuan Shi\vspace{0.1cm}\\ University of Waterloo
\vspace{0.8cm}\\ \today\vspace{0.2cm}}

\address{{\bf Maria Andraos}: University of Waterloo -- Department of Statistics and Actuarial Science -- 200 University Ave.\ W.\ -- Waterloo, ON, N2L 3G1 -- Canada\vspace{0.1cm}}\email{\href{mailto:mandraos@uwaterloo.ca}{mandraos@uwaterloo.ca}\vspace{0.2cm}}

\address{{\bf Mario Ghossoub}: University of Waterloo -- Department of Statistics and Actuarial Science -- 200 University Ave.\ W.\ -- Waterloo, ON, N2L 3G1 -- Canada\vspace{0.1cm}}\email{\href{mailto:mario.ghossoub@uwaterloo.ca}{mario.ghossoub@uwaterloo.ca}\vspace{0.2cm}}

\address{{\bf Bin Li}: University of Waterloo -- Department of Statistics and Actuarial Science -- 200 University Ave.\ W.\ -- Waterloo, ON, N2L 3G1 -- Canada\vspace{0.1cm}}\email{\href{mailto:bin.li@uwaterloo.ca}{bin.li@uwaterloo.ca}\vspace{0.2cm}}

\address{{\bf Benxuan Shi}: University of Waterloo -- Department of Statistics and Actuarial Science -- 200 University Ave.\ W.\ -- Waterloo, ON, N2L 3G1 -- Canada\vspace{0.1cm}}\email{\href{mailto:benxuan.shi1@uwaterloo.ca}{benxuan.shi1@uwaterloo.ca}}

%============================================
%============================================
%============================================

\begin{abstract}
We study Stackelberg Equilibria (Bowley optima) in a monopolistic centralized sequential-move insurance market, with a profit-maximizing insurer who sets premia using a distortion premium principle, and a single policyholder who seeks to minimize a distortion risk measure. We show that equilibria are characterized as follows: In equilibrium, the optimal indemnity function exhibits a layer-type structure, providing full insurance over any loss layer on which the policyholder is more pessimistic than  the insurer's pricing functional about tail losses; and no insurance coverage over loss layers on which  the policyholder is less pessimistic than the insurer's pricing functional about tail losses. In equilibrium, the optimal pricing distortion function is determined by the policyholder's degree of risk aversion, whereby prices never exceed the policyholder's marginal willingness to insure tail losses. Moreover, we show that both the insurance coverage and the insurer's expected profit increase with the policyholder's degree of risk aversion. Additionally, and echoing recent work in the literature, we show that equilibrium contracts are Pareto efficient, but they do not induce a welfare gain to the policyholder. Conversely, any Pareto-optimal contract that leaves no welfare gain to the policyholder can be obtained as an equilibrium contract. Finally, we consider a few examples of interest that recover some existing results in the literature as special cases of our analysis.  
\end{abstract}

\maketitle

%============================================
%============================================
%============================================

\newpage

\section{Introduction}
\label{SecIntro}
In monopoly insurance markets under perfect information, the classical literature has been mostly interested in characterizing the insurer's profit-maximizing (or welfare-maximizing) insurance contracts, without any consideration for strategic interaction between the insurer and insured. Whenever such strategic considerations are introduced into the market model, the natural framework is a sequential-move game with the insurer as the leader and the insured as the follower, borrowing from the literature on bilateral monopoly. This is a two-stage game, whereby in the first stage, the insured selects their optimal indemnification given a certain pricing mechanism selected by the insurer; and in the second stage, the insurer observes the insured's demand function and sets prices so as to maximize profit or welfare. The associated strategic equilibrium concept has been termed the \textit{Stackelberg equilibrium}, or \textit{Bowley optimum}.

\medskip

Bowley optima were first introduced by \cite{bowley1928} in the context of a bilateral monopoly, and then first applied to insurance markets by \cite{chan1985reinsurer}, in the context of Expected-Utility (EU) preferences with exponential utility functions. Several extensions and/or modifications to this model have subsequently been proposed. For instance, \cite{taylor1992}  extends these findings to more general risk exchanges with EU preferences. \cite{cheung2019risk} consider the case of markets in which the insurer is a risk-neutral Expected utility maximizer, who sets premia using a distortion premium principle, and a insured who seeks to minimize a distortion risk measure (DRM). They assume that the insured's distortion function is either a concave function (indicating risk aversion), or a distortion function corresponding to the Value-at-Risk (VaR) risk measure. \cite{li2021bowley} characterize Stackelberg equilibria when agents use a mean-variance functional. \cite{boonen2021a} and \cite{boonen2021b} examine the effect of information asymmetry in the context of DRMs. \cite{GLS2025} propose an extension in a different direction, by examining the optimal (nonlinear) pricing mechanisms in the market for the class of deductible and coinsurance indemnity functions. \cite{BOONEN2023382} examine the relationship between Bowley optimality and Pareto optimality, under fairly general preferences. \cite{ZHU202324} and \cite{GhossoubZhu2024} provide the first extensions beyond the case of a two-agent insurance market. The former consider a market with multiple insurer having the first move advantage, and one insured; whereas the latter consider the case of one monopoly insurance facing demand from several policyholders. \cite{ANDRAOS2026103210} recently provided a unification and an extension thereof to more general preferences.  

\medskip

Beyond the static framework discussed above, a growing literature has studied Stackelberg equilibria in dynamic settings using stochastic differential game techniques, in which the reinsurer, acting as the leader, and the insurer, acting as the follower, interact sequentially over time. 
In particular, \cite{Chen_Shen_2018} study a continuous-time Stackelberg reinsurance game in which both parties aim to maximize the expected utility of their terminal wealth.
\cite{CHEN2019120} extend this framework to a mean-variance setting. More recent contributions extend this framework to settings involving random horizons, ambiguity aversion, and L\'evy risk processes. In particular, \cite{LI202242} consider the problem with a random time horizon under the mean-variance principle, while \cite{CAO2022128} study a dynamic Stackelberg game between an ambiguity-averse insurance buyer and seller under the spectrally negative L\'evy processes. 
\cite{cao2023stackelberg} extend the model of \cite{CAO2022128} to a general divergence penalty for ambiguity. 
Several extensions also incorporate investment decisions and interactions in stochastic financial markets (e.g., \cite{gu2020optimal}, \cite{wang2020robust}, \cite{bai2021stackelberg}, \cite{bai2022hybrid}, \cite{yang2022robust}), while more recent studies consider dynamic settings with multiple reinsurers (e.g., \cite{CAO2023928}, \cite{cao2024stackelberg}, \cite{chen2020continuous}).

\medskip

In this paper, we consider a monopoly insurance market with a single policyholder. We assume that the policyholder evaluates insurance contracts using a distortion risk measure, following the approach of \cite{ASSA201570}, \cite{cheung2019risk}, or \cite{ZHU202324}, for instance. The insurer is a risk-neutral profit maximizer, who sets premia using a distortion premium principle. To avoid \textit{ex post} moral hazard that might arise from the policyholder's misreporting of the true value of the loss, we impose the customary \textit{no sabotage} condition of \cite{carlierdana2003,CarlierDana2005a} on the set of acceptable indemnity functions. Our framework is most closely related to that of \cite{cheung2019risk}. However, in contrast to their study, which assumes that the policyholder is either strictly risk averse (strictly concave distortion function) or a VaR minimizer, we impose no specific restriction on the curvature of the policyholder's distortion function. This is a significant extension, as it accommodates for several forms of distortion functions that are empirically more relevant than the concave distortion functions, such as the inverse-S-shaped distortion functions of \cite{Tversky1992}, the S-shaped distortion functions of \cite{Prelec98}, or the very flexible class of distortion functions recently introduced by \cite{bleichrodt2023testing}. 

\medskip

To characterize Stackelberg equilibria in our model, we proceed in two steps. First for a fixed distortion function used to determine the distortion premium principle, we determine the optimal indemnity function that minimizes the policyholder's risk measure of their end-of-period risk exposure (Theorem \ref{ThStep1}). Second, using the resulting optimal indemnity, we find the pricing distortion function that maximizes the insurer's expected profit (Corollary \ref{ThStep2}). We show that, in the first step, the optimal indemnity function exhibits a layer-type structure, determined by the interplay between the policyholder's distortion function and the insurer's pricing distortion function. Specifically, the optimal indemnity provides full indemnification over any loss layer on which the policyholder is more pessimistic than  the insurer's pricing functional about tail losses. When the policyholder is less pessimistic than the insurer's pricing functional about tail losses, no indemnification is offered. Finally, when the policyholder and the insurer's pricing functional are equally pessimistic about tail losses, the marginal indemnity may take an arbitrary shape, within the global $1$-Lispchitz (i.e., comonotonicity) constraints. In the second step, the optimal pricing distortion is determined by the policyholder's degree of (weak) risk aversion, that is, whether the policyholder's distortion function is above or below the identity function. In equilibrium, the insurer selects a pricing distortion that is aligned with the policyholder's risk perception, as encoded by their distortion function, in the sense that prices never exceed the policyholder's marginal willingness to insure tail losses. Moreover, we show that both the insurance coverage and the insurer's expected profit increase with the policyholder's degree of strong or weak risk aversion.

\medskip

We also analyze the Pareto efficiency of the Stackelberg equilibrium contracts. Our results show that any Stackelberg equilibrium contract is Pareto optimal and makes the policyholder indifferent between participating and not participating in the insurance market.  Moreover, any Pareto-optimal contract in which the policyholder is indifferent between participation and non-participation can be obtained at a Stackelberg equilibrium. These findings echo similar results obtained by \cite{BOONEN2023382} and \cite{GhossoubZhu2024}, and they highlight a well-known fundamental phenomenon that occurs in monopoly markets, whereby all consumer surplus is extracted by the monopoly.

\medskip

When the policyholder is either weakly risk averse (with a distortion function that lies above the identity function) or strongly risk averse (with a concave distortion function), we show that the optimal contract provides full insurance, and the optimal pricing distortion function coincides with that of the policyholder. This recovers the result of \cite{cheung2019risk} in the case of concave distortion function. For policyholders who minimize VaR at a confidence level $\alpha \in (0,1)$, the optimal coverage includes an upper limit. Specifically, the contract provides full insurance for losses below the $\alpha$-quantile while leaving the upper tail uninsured. This recovers another result of \cite{cheung2019risk}. However, in contrast, we show that for individuals with inverse-S shaped distortion functions, reflecting strong sensitivity to extreme losses, the optimal indemnity function takes the form of a deductible contract, whereby extreme losses are fully transferred to the insurer.

\medskip

The remainder of this paper is organized as follows. Section~\ref{SecProbFormDis} introduces the insurance market setting, including the risk preferences of the agents, the market mechanisms and resulting insurance contracts, and the (Stackelberg, or Bowley) equilibrium concept. Section~\ref{sec:SEcharacterization} provides a characterization of Stackelberg equilibria in our setting. Section~\ref{SecPO} examines the Pareto efficiency of the equilibrium contracts, and provides a version of the two welfare theorems. Section~\ref{SecExa} presents several examples of interest, for specific types of policyholders. Finally, Section~\ref{SecConDis} concludes. Proofs and related analysis are given in the \hyperlink{LinkToAppendix}{Appendices}.

%============================================
%============================================
%============================================
\bigskip 
\section{Problem Formulation} 
\label{SecProbFormDis}

Let $B(\mathcal{F})$ denote the set of all bounded random variables on a given non-atomic probability space $(\Omega,\mathcal{F},\mathbb{P})$. An individual, or potential policyholder, is subject to an insurable loss, which we model as a random variable $X \in B^+(\mathcal{F})$, the positive cone of $B(\mathcal{F})$, with range $[0,M]$. A positive realization of $X$ is  seen as a loss. We denote by $F_X$ the cumulative distribution function of $X$, and by $F_X^{-1}$ the left-continuous inverse of $F_X$, i.e., the quantile of $X$, defined as:
$$
F_X^{-1}(t) =\inf\left\{z\in\bbR^+ \, \middle| \, F_X(z)\ge t\right\},\ \forall t\in[0,1].
$$

\noindent Let $\mathcal{Q}$ denote the set of quantile functions of random variables in $B(\mathcal{F})$. That is,
$$
\mathcal{Q} =\left\{q:(0,1)\to\bbR^+\,\middle|\,q \mbox{ is non-decreasing and left-continuous}\right\}.
$$

%=========================
\medskip
\subsection{Preferences}
\begin{definition}
\label{defChoquetExpression}
A distortion risk measure $\rho$  on $(\Omega,\mathcal{F},\mathbb{P})$, is defined as the Choquet integral with respect to the distorted probability measure $T \circ \p$. Namely, for any $Y \in B(\cF)$, 
\begin{align*}
\rho(Y)
=\int\, Y\,\mathrm{d}T\circ\mathbb{P}
= \int_0^{+\infty}T(\mathbb{P}(Y\geq y))\,\mathrm{d}y + \int_{-\infty}^0 \left[T(\mathbb{P}(Y\geq y))-1\right]\mathrm{d}y,
\end{align*}

\noindent where $T:[0,1]\rightarrow[0,1]$ is a distortion function, that is, a non-decreasing differentiable mapping, satisfying $T(0)=0$ and $T(1)=1$.
\end{definition}

The conjugate of the distortion function $T$, is given by: $\tilde T (t)= 1 - T(1-t)$, for all $t \in [0,1]$. It is easy to verify that $\tilde T$ is a distortion function. 

\medskip

The policyholder's preference over $B(\mathcal F)$ is assumed to admit a distortion
risk measure representation, associated with a distortion function $T$. Specifically, for any
$Z \in B(\mathcal F)$, the policyholder evaluates risk according to
\begin{equation}\label{RiskPreference}
\rho^{Pol}(Z) := \int Z \, d \,T \circ \mathbb P.
\end{equation}

\noindent The induced preference relation $\succcurlyeq$ on $B(\mathcal F)$ is defined by
$ Z_1 \succcurlyeq Z_2$, if and only if, $\rho^{\mathrm{Pol}}(Z_1) \le \rho^{\mathrm{Pol}}(Z_2)$. 
Indifference is defined by $Z_1 \sim Z_2$, whenever
$Z_1 \succcurlyeq Z_2$, and $Z_2 \succcurlyeq Z_1$.

%=========================
\medskip
\subsection{Risk Aversion}
Given two random variables $Z_1, Z_2 \in B (\cF)$, we say that $Z_1$ dominates $Z_2$ in \textit{Second-Order Stochastic Dominance} (SSD), written $Z_1 \ssd Z_2$, if 
$$
\displaystyle \int_{-\infty}^{x}F_{Z_1}(t) \ dt \leq \int_{-\infty}^{x} F_{Z_2}(t) \ dt, \ \text{for all $x \in \mathbb{R}$}.
$$

\noindent Moreover, $Z_2$ is said to be a \textit{Mean-Preserving Increase in Risk} (MPIR) of $Z_1$ if $\E[Z_1] = \E[Z_2]$ and $Z_1 \ssd Z_2$. We next discuss weak and strong risk aversion of a preference $\succcurlyeq$ over $B(\cF)$. 

\medskip

\begin{definition}
\label{RiskAversion} 
A preference $\succcurlyeq$ over $B(\mathcal{F})$ is said to be weakly risk averse, if $\E[Z] \succcurlyeq Z$ for all $Z \in B(\mathcal{F})$. 
\end{definition}

\medskip

\begin{definition}
A preference $\succcurlyeq$ over $B(\mathcal{F})$ is said to be strongly risk averse if it ranks any random variable above all of its mean-preserving increases in risk. That is, $\succcurlyeq$ is strongly risk averse if $Z_1 \succcurlyeq Z_2$, for all $ Z_1, Z_2 \in B(\mathcal{F})$, such that $Z_2$ is a MPIR of $Z_1$.
\end{definition}

Strong risk aversion implies weak risk aversion, since any $Z\in B(\mathcal F)$
is a mean-preserving increase in risk of $\E[Z]$.
Moreover, by the positive homogeneity of the Choquet integral, it follows that, 
$$\E[Z] \succcurlyeq Z \Longleftrightarrow \E[Z] \leq \rho^{Pol}(Z), \ \forall Z \in B(\mathcal{F}).$$

\medskip

By a classical result (see, e.g., \cite{Yaari87}), risk aversion under distortion risk measures is characterized by properties of the distortion function $T$. In particular, the policyholder is weakly risk averse if and only if $T(t) \geq t$ for all $t\in[0,1]$. Moreover, the policyholder is strongly risk averse if and only if the distortion function $T$ is concave.

\medskip

Recall that for a given preference relation $\succcurlyeq$, the certainty equivalent of $Z \in B(\mathcal{F})$ is the constant $\text{CE}^\succcurlyeq(Z) \in \mathbb{R}$ such that 
$$Z \sim \text{CE}^\succcurlyeq(Z),$$

\noindent and the risk premium associated with $Z \in B(\mathcal{F})$ is defined as
$$\Delta^\succcurlyeq(Z) := \E[Z] - \text{CE}^\succcurlyeq(Z).$$

\medskip

Following \cite{ChewKarniSafra1987}, \cite{quiggin93}, and \cite{GhossoubHe2021}, we define comparative notions of risk aversion below.

\medskip

\begin{definition}[Comparative risk aversion]
\label{CompRiskAversion}
Consider two preference relations $\succcurlyeq$ and $\succcurlyeq^*$ over $B(\mathcal{F})$:
\begin{enumerate} 
\item  $\succcurlyeq^*$ is said to be {more weakly risk averse} than $\succcurlyeq$ if, for any $Z \in B(\mathcal{F})$, $\Delta^{\succcurlyeq^*}(Z) \geq \Delta^\succcurlyeq(Z)$.

\medskip

\item  $\succcurlyeq^*$ is said to be {more strongly risk averse} than $\succcurlyeq$ if $Z_1 \succcurlyeq^* Z_2$ for any $Z_1, Z_2 \in B(\mathcal{F})$ such that:
\begin{enumerate}
\item $Z_1 \sim Z_2$.
\medskip
\item There exists $z_0 \in \mathbb{R}$ with $F_{Z_2}(z) \geq F_{Z_1}(z)$ for all $z < z_0$, and $F_{Z_2}(z) \leq F_{Z_1}(z)$ for all $z \geq z_0$.
\end{enumerate} 
\end{enumerate} 
\end{definition}

By a classical result (e.g., \cite{ChewKarniSafra1987}), we obtain the following characterization of comparative weak and strong risk aversion for distortion risk measures. 

\medskip

\begin{proposition}\label{CompT}
Consider two policyholders whose preferences $\succcurlyeq$ and $\succcurlyeq^*$ over $B(\mathcal{F})$ admit representations by distortion risk measures $\rho^{Pol}$ and ${\rho^{*}}^{Pol}$ respectively. Let $T$ and $T^*$ denote the respective distortion functions of each policyholder. Then the following holds: 
\begin{enumerate}
\item The second policyholder is more weakly risk averse than the first policyholder if and only if $T^*(t) \geq T(t)$, for all $t \in [0,1]$.

\item The second policyholder is more strongly risk averse than the first policyholder if and only if $T^*$ is a concave transformation of $T$. That is, there exists an increasing and concave function $g: [0,1] \to [0,1]$, satisfying $g(0) = 0$, $g(1) = 1$, and $T^*(t) = g\left(T(t)\right)$ for all $t \in [0,1]$.
\end{enumerate}
\end{proposition}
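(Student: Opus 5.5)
We handle the two parts separately. In both, the main tool is the quantile representation of the distortion risk measure. For $Z$ with quantile $F_Z^{-1}$ and differentiable $T$,
$$\rho^{Pol}(Z)=\int_0^1 F_Z^{-1}(1-s)\,T'(s)\,\mathrm{d}s=\int_0^1 F_Z^{-1}(u)\,\mathrm{d}\tilde T(u).$$
Together with translation invariance and positive homogeneity, this gives $\mathrm{CE}^{\succcurlyeq}(Z)=\rho^{Pol}(Z)$, since $\rho^{Pol}(c)=c$ and $\rho^{Pol}$ is strictly increasing on constants. Hence $\Delta^{\succcurlyeq}(Z)=\E[Z]-\rho^{Pol}(Z)$.

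\textbf{Part (1).} The weak comparison $\Delta^{\succcurlyeq^*}(Z)\ge\Delta^{\succcurlyeq}(Z)$ for all $Z$ is therefore equivalent to $\rho^{*Pol}(Z)\le\rho^{Pol}(Z)$ for all $Z$. Here the paper's ordering convention (in which $Z_1\succcurlyeq Z_2$ iff $\rho(Z_1)\le\rho(Z_2)$) must be tracked with care, so that the direction matches $T^*\ge T$. I would first fix the sign by checking the case of Bernoulli variables. Take $Z=\One_A$ with $\p(A)=t$, which is possible since the space is non-atomic. Then $\rho(Z)=T(t)$, and a pointwise comparison of $T$ and $T^*$ follows immediately; this is necessity. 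For sufficiency, use the layer-cake formula from Definition~\ref{defChoquetExpression}. It writes $\rho(Z)$ as an integral over $y$ of $T(\p(Z\ge y))$, suitably shifted on the negative part. Pointwise monotonicity of the integrand in $T$ then gives the ordering of $\rho^*$ and $\rho$ for every $Z$.

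\textbf{Part (2).} Since $T$ and $T^*$ are non-decreasing, write $T^*=g\circ T$, where $g$ is defined on the range of $T$; this is well defined precisely when $T^*$ is constant on the level sets of $T$. Concavity of $g$ is then the Chew--Karni--Safra condition. For sufficiency, take $Z_1\sim Z_2$ with the single-crossing property, which says that $Z_2$ has the thicker tails. Write $\rho^*(Z_i)$ as a layer-cake integral of $g(T(\p(Z_i\ge y)))$. Set $u_i(y)=T(\p(Z_i\ge y))$. The single-crossing condition gives $u_2\ge u_1$ for $y\ge z_0$ and $u_2\le u_1$ for $y<z_0$, while $\int (u_2-u_1)\,\mathrm{d}y=0$ because $Z_1\sim Z_2$. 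Concavity of $g$ gives
$$g(u_2)-g(u_1)\le g'(u_1)(u_2-u_1).$$
The slope $g'(u_1(y))$ is larger for large $y$, because $u_1$ is decreasing in $y$ and $g'$ is decreasing. A Chebyshev/crossing argument about $z_0$ then yields
$$\int \bigl[g(u_2)-g(u_1)\bigr]\mathrm{d}y\ \le\ g'(u_1(z_0))\int (u_2-u_1)\,\mathrm{d}y=0,$$
hence $Z_1\succcurlyeq^*Z_2$ after reconciling the sign conventions.

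For necessity, I would argue by contradiction using simple three-point distributions built on the non-atomic space. First, if $T^*$ is not a function of $T$, pick $t_1<t_2$ with $T(t_1)=T(t_2)$ but $T^*(t_1)<T^*(t_2)$. Two indicator-type variables are then $\rho$-indifferent but strictly ordered under $\rho^*$. With a small adjustment of the payoffs this produces a single-crossing pair violating the hypothesis. Second, if $g$ fails concavity at some points $a<b<c$ in the range of $T$, construct $Z_1$ and $Z_2$ taking values on a common grid, with tail probabilities $T^{-1}(a)$, $T^{-1}(b)$, $T^{-1}(c)$. Choose the payoffs so that $\rho(Z_1)=\rho(Z_2)$ (one linear equation) and the distribution functions cross once. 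Then $\rho^*$ orders them the wrong way, by the strict convexity inequality for $g$.

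The main obstacle is this necessity construction in part (2). The difficulty is keeping the single-crossing shape while matching $\rho$-values, and handling the possible non-injectivity of $T$. Since the result is classical, I would lean on \cite{ChewKarniSafra1987}, respectively \cite{quiggin93}, for this direction, and verify only that their hypotheses (rank-dependent form with probability transformation $T$) match our Choquet representation.
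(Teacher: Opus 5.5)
The paper does not prove this proposition; it only cites the classical result of \cite{ChewKarniSafra1987}. Your direct plan is the natural one: an indicator test plus layer-cake monotonicity for (1), a single-crossing argument for sufficiency in (2), and deferral to Chew--Karni--Safra for necessity in (2). However, two steps fail as written, and both are the places where you appeal to ``reconciling the sign conventions.''

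\textbf{Sufficiency in (2).} Set $u_i(y)=T(\p(Z_i\ge y))$. The target $Z_1\succcurlyeq^* Z_2$ means ${\rho^*}^{Pol}(Z_1)\le{\rho^*}^{Pol}(Z_2)$, i.e.\ $\int[g(u_2)-g(u_1)]\,\mathrm{d}y\ge 0$; there is no further convention to reconcile. Your chain concludes the opposite inequality, and that inequality is false: take $T(t)=t$, $g$ strictly concave, $Z_2$ nondegenerate and $Z_1\equiv\E[Z_2]$. The broken step is the Chebyshev bound. The map $y\mapsto g'(u_1(y))$ is nondecreasing, while $u_2-u_1$ is $\le 0$ before $z_0$ and $\ge 0$ after. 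The crossing argument therefore gives $\int g'(u_1)(u_2-u_1)\,\mathrm{d}y\ge g'(u_1(z_0))\int(u_2-u_1)\,\mathrm{d}y=0$, not $\le$. Combined with your tangent bound at $u_1$, which is an upper bound, nothing follows. Take the tangent at $u_2$ instead. Concavity gives $g(u_2)-g(u_1)\ge g'(u_2)(u_2-u_1)$, using a supergradient where needed. The map $y\mapsto g'(u_2(y))$ is again nondecreasing, so
\[
\int\bigl[g(u_2)-g(u_1)\bigr]\mathrm{d}y\ \ge\ \int g'(u_2)\,(u_2-u_1)\,\mathrm{d}y\ \ge\ g'(u_2(z_0))\int(u_2-u_1)\,\mathrm{d}y\ =\ 0,
\]
which is exactly what you need.

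\textbf{Direction in (1).} You correctly obtain $\mathrm{CE}^{\succcurlyeq}(Z)=\rho^{Pol}(Z)$. With the paper's literal definition $\Delta^{\succcurlyeq}(Z)=\E[Z]-\mathrm{CE}^{\succcurlyeq}(Z)$, the condition $\Delta^{\succcurlyeq^*}\ge\Delta^{\succcurlyeq}$ is therefore ${\rho^*}^{Pol}\le\rho^{Pol}$. Your Bernoulli test $Z=\One_A$ then yields $T^*\le T$, the reverse of the claim. No amount of careful bookkeeping turns this into $T^*\ge T$, because under the literal definitions (1) is false. For example, $T(t)=t$, $T^*(t)=\min(1,2t)$ and $Z=\One_A$ with $\p(A)=1/4$ give $\Delta^{\succcurlyeq}(Z)=0>-1/4=\Delta^{\succcurlyeq^*}(Z)$. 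Indeed, with that definition every weakly risk-averse policyholder has $\Delta^{\succcurlyeq}\le 0$. You must state explicitly that, for losses, the premium has to be read as $\mathrm{CE}^{\succcurlyeq}(Z)-\E[Z]=\rho^{Pol}(Z)-\E[Z]$. This is the gains definition applied to $-Z$, and it is how Remark~\ref{RemComp} and Theorem~\ref{CompStatWRA} use the notion. With that reading, your indicator test and layer-cake argument give exactly $T^*\ge T$.

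The rest of your plan is fine. In the level-set step of necessity you need no payoff adjustment. With $\p(A_1)=t_1<t_2=\p(A_2)$ and $T(t_1)=T(t_2)$, the pair $Z_1=\One_{A_2}$, $Z_2=\One_{A_1}$ with $z_0=1$ already satisfies the single-crossing condition. It therefore forces $T^*(t_2)\le T^*(t_1)$.
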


\medskip

\begin{remark}\label{RemComp}
If $g$ is an increasing and concave function on $[0,1]$ such that $g(0)=0$ and $g(1)=1$, then $g(t) \geq t$, for all $t \in [0,1]$. Consequently, if $T^* \equiv  g \circ T$, we obtain that $T^*(t) \geq T(t)$ for all $t \in [0,1]$. That is, if $T^*$  is more strongly risk averse than $T$, then $T^*$ is also more weakly risk averse than $T$.
\end{remark}

%=====================
\medskip
\subsection{Market Mechanisms and Contracts}
The market allows the policyholder to cede part of the loss $X$ to an insurer, in exchange for a premium payment. We assume that the market only offers indemnities in the set of \textit{ex ante} admissible indemnity schedules $\mathcal{I}_L$ defined below. 
\begin{equation*}
\mathcal{I}_L = \Big\{I: [0,M]\rightarrow [0,M] \ \Big\vert \ I(0)=0, \ 0\leq I(x_1)- I(x_2)\leq x_1-x_2, \forall\, x_2\leq x_1 \in [0,M]\Big\}.
\end{equation*}

That is, $\cI_L$ is the set of $1$-Lipschitz functions satisfying the so-called \textit{no-sabotage condition} of \cite{carlierdana2003}, so as to rule \textit{ex post} moral hazard that could arise from misreporting of the actual realized loss.

\medskip

\begin{definition}
An insurance contract is a pair $(I,\pi)$, where $I \in \mathcal{I}_L$ is an indemnity function and $\pi\in\mathbb{R}$ is the premium paid by the policyholder for coverage $I$.
\end{definition}

\medskip

\begin{assumption}
The insurer is assumed to price insurance using a distortion premium principle of the form:
\begin{align}\label{Premium}
\Pi_g\left(I(X)\right)
:= \int I(X) \, d g \circ \mathbb{P}, \  \forall \, I \in \mathcal{I}_L,
\end{align}

\noindent for some distortion function $g$, which we hereafter refer to as the pricing distortion used by the insurer.
\end{assumption}

\medskip

\begin{definition}
\label{DefMechCont}
A market mechanism is a pair $(I,g)$, where $I \in \mathcal{I}_L$ is an indemnity function and $g$ is a pricing distortion function.
\end{definition}

\medskip

A market mechanism $(I,g)$ induces an insurance contract of the form $\left(I,\Pi_g \big( I(X) \big)\right)$, where the premium is computed using the pricing distortion $g$. The insurer's end-of-period profit is therefore given by $\Pi_g \big( I(X) \big) - I(X)$, while the policyholder's end-of-period risk exposure is given by $X-I(X) + \Pi_g \big( I(X) \big)$. Accordingly, the insurer's resulting expected profit is:
\begin{align}\label{InDis}
V^{In}(I,g) = \Pi_g \big( I(X) \big) -\mathbb{E}\left[I(X)\right],
\end{align}

\noindent and the policyholder evaluates this risk exposure using:
\begin{align}\label{PolDis1}
\rho^{Pol}(I,g) 
= \rho^{Pol} \Big( X - I(X) + \Pi_g \big( I(X) \big ) \Big).
\end{align}

Letting $R(X) := X - I(X)$ denote the retention function, i.e., the part of the loss $X$ that is retained by the policyholder, and by translation invariance of $\rho^{Pol}$, \eqref{PolDis1} reduces to:
\begin{equation}
\label{PolDis}
\rho^{Pol}(I,g) 
=\rho^{Pol}\big( R(X)  \big)  + \Pi_g \big( I(X) \big).
\end{equation}

\medskip

\begin{remark}
\label{re:Rfeasible}
\label{re:IandRcomonotone}
An indemnity function $I$ belongs to $\mathcal I_L$ if and only if the corresponding retention function $R(X)=X-I(X)$ belongs to $\mathcal I_L$.  Moreover, since $I$ is non-decreasing, the random variables $I(X)$ and $R(X)$ are comonotonic \footnote{Two random variables $X, Y \in B(\mathcal{F})$ are said to be comonotone if $(X(\omega_1) - X(\omega_2))(Y(\omega_1) -Y(\omega_2)) \geq 0$, for all $\omega_1,\omega_2\in \Omega$.}.
\end{remark}

%============================================
\medskip
\subsection{A Sequential-Move Game}
The insurance market is modeled as a sequential move game, in which the insurer, having the first-mover advantage, starts by selecting a pricing distortion function $g$. Given that choice, the policyholder then selects an indemnity function that minimizes their risk exposure $\rho^{Pol}(I,g)$. Anticipating the policyholder's optimal indemnity choice as a function of the selected pricing distortion function $g$, the insurer selects the optimal distortion function $g^*$ that maximizes their expected profit $V^{In}(I,g)$. The equilibrium concept that is best suited for this sequential game is the Stackelberg equilibrium. 

\medskip

\begin{definition}\label{BO}
A given market mechanism $( I^*,g^*) $ is said to be a Stackelberg Equilibrium (SE), if 
\begin{enumerate}
\item $I^* \in \underset{I \in \mathcal{I}_L} {\arg\min} \ \rho^{Pol}( I,{g^*})$, and
\medskip
\item $V^{In}(I^*,{g^*}) \geq V^{In}( I,g)$, for all $(I,g)$ such that $ I\in \underset{\bar I \in \mathcal{I}_L}{\arg\min} \ \rho^{Pol}( \bar I,g)$.
\end{enumerate}
\end{definition}

Definition \ref{BO} suggests that Stackelberg equilibria can be characterized through a two-step procedure. In the first step, for a fixed pricing distortion function $g$, the policyholder chooses an indemnity function that minimizes their risk exposure. This problem will be referred to as the \emph{policyholder's problem}. In the second step, anticipating the policyholder's optimal response as a function of $g$, the insurer selects a pricing distortion function $g^*$ that maximizes expected profit. This problem will be referred to as the \emph{insurer's problem}.

\medskip

For a given insurance contract $(I,\pi) \in  \mathcal{I}_L \times \mathbb{R}$, the policyholder's risk exposure and the insurer's expected profit can be written as 
$$
\rho^{Pol}(I,\pi) = \rho^{Pol}\left(X-I(X)+\pi\right)
\ \hbox{ and } \ 
V^{In}(I,\pi)=\pi -\mathbb{E}\left[I(X)\right].
$$

\medskip

\begin{definition}\label{def:IR}
An insurance contract $(I^*,\pi^*) \in  \mathcal{I}_L \times \mathbb{R}$ is said to be individually rational, IR, if
$$
\rho^{Pol}(I^*,\pi^*) \leq \rho^{Pol}(0,0)
\ \hbox{ and } \ 
V^{In}(I^*,\pi^*) \geq V^{In}(0,0).
$$
\end{definition}

Definition \ref{def:IR} states that an insurance contract $(I^*,\pi^*) \in  \mathcal{I}_L \times \mathbb{R}$ is individually rational, if it incentivizes the policyholder and the monopolist insurer to participate in the market. 

\medskip

\begin{definition}\label{def:PO}
An insurance contract $(I^*,\pi^*) \in  \mathcal{I}_L \times \mathbb{R}$ is said to be Pareto optimal, PO, if there does not exist another contract  $(I,\pi) \in \mathcal{I}_L \times \mathbb{R}$ such that 
$$
\rho^{Pol}(I,\pi) \leq \rho^{Pol}(I^*,\pi^*)
\ \hbox{ and } \ 
V^{In}(I,\pi) \geq V^{In}(I^*,\pi^*),
$$

\noindent with at least one strict inequality. 
\end{definition}

Definition \ref{def:PO} states that an insurance contract $(I^*,\pi^*) \in  \mathcal{I}_L \times \mathbb{R}$ is Pareto optimal if there is no alternative contract that weakly reduces the policyholder's risk exposure and weakly increases the insurer's profit, with at least one of these improvements being strict.

%============================================
%============================================
%============================================
\bigskip
\section{Characterization of Stackelberg Equilibria} 
\label{sec:SEcharacterization}

In this Section, we aim to characterize Stackelberg equilibria through a two-step procedure as suggested in Definition \ref{BO}. Specifically, in Subsection \ref{SecPH}, we study the policyholder's problem, which constitutes the first step in determining Stackelberg equilibria. Then, in Subsection \ref{SecIN}, we consider the second step by addressing the insurer's problem.

\medskip

\subsection{The Policyholder's Problem}
\label{SecPH}
For a given choice of pricing distortion function $g$, the policyholder chooses an indemnity $I \in \cI_L$ to minimize risk exposure: 
\begin{align}
\label{prob:Pol1}
\min\limits_{I\in \mathcal{I}_L} \, \rho^{Pol}(I,g).
\end{align}

To analyze the policyholder's problem given in \eqref{prob:Pol1}, we impose the following assumption on the loss distribution that will allow us to reformulate the problem in terms of quantile functions.

\medskip

\begin{assumption}\label{loss}
The cumulative distribution function $F_X$ is strictly increasing. 
\end{assumption} 

It follows from Assumption \ref{loss} that $F_X$ is differentiable almost everywhere. Moreover, by \citet[Lemma A.25]{FollmerSchied2016}, Assumption \ref{loss} also guarantees that $U := F_X(X)$ is uniformly distributed on $(0,1)$, and that $X=F_X^{-1}(U), \ \p$-a.s. 

\medskip

\begin{remark}
For each $I \in \mathcal{I}_L$, $I(X)$ and $X-I(X)$ have strictly increasing cumulative distribution functions by Assumption \ref{loss}. Consequently, their quantile functions are strictly increasing, left-continuous, and differentiable a.e.\ on $[0,1]$.
\end{remark}

\medskip

For any $Z \in B^+(\mathcal{F})$ whose quantile function $F_Z^{-1}$ is differentiable almost everywhere, the policyholder's risk measure admits the following representation:
\begin{align*}
\rho^{Pol}(Z)
&=\int\, Z \,\mathrm{d}T\circ\mathbb{P} 
=\int \left(F_Z^{-1}\right)^\prime(U) \, T(1-U)\,\mathrm{d}\p.
\end{align*}

\noindent Similarly, the premium evaluated under the given pricing distortion $g$ can be written as
\begin{align*}
\Pi_g\left(Z\right)
=\int Z \, d g \circ \mathbb{P}
=\int \left(F_Z^{-1}\right)^\prime(U) \, g(1-U)\,\mathrm{d}\p.
\end{align*}

\noindent Applying these formulas to the retention $R(X)$ and the indemnity $I(X)$ respectively, and using expression \eqref{PolDis} of  $\rho^{\mathrm{Pol}}(I,g)$, we obtain
\begin{equation}
\label{eq:rho_quantile}
\rho^{Pol}(I,g)
=\int \left(F_{R(X)}^{-1}\right)^\prime(U) \ T(1-U)\,\mathrm{d}\p
+
\int \left(F_{I(X)}^{-1}\right)^\prime(U) \ g(1-U)\,\mathrm{d}\p. 
\end{equation}

\medskip

\begin{remark}
\label{re:quantilecoadd}
Since $X = I(X) + R(X)$, we can write $F_X^{-1} = F_{I(X)}^{-1} + F_{R(X)}^{-1}$, by comonotonic additivity of the quantile function. 
\end{remark}

As a result of the above remark, \eqref{eq:rho_quantile} can be rewritten as follows:
\begin{align*}
\rho^{Pol}(I,g)
&=\int_0^1 \left(F_{R(X)}^{-1}\right)^\prime(t) \ \left[T(1-t) - g(1-t)\right]\mathrm{d}t
+
\int_0^1 \left(F_{X}^{-1}\right)^\prime(t)
\ g(1-t)\mathrm{d}t.
\end{align*}

\medskip

Let $\cQ_L$ be the set of admissible quantile functions, defined as
$$
\mathcal{Q}_L=\left\{q \in \mathcal{Q}\, \Big |\,q(0)=0, \ 0\leq q^\prime(t)\leq \left(F_X^{-1}\right)^\prime(t)\right\}.
$$

\noindent Each $q \in \mathcal Q_L$ corresponds to the quantile of the retention random variable $R(X)=X-I(X)$ for some admissible $I \in \mathcal I_L$. Hence, for a given $q \in \mathcal{Q}_L$, we have:
\begin{align}
\label{eq:pol_quantile_rep}
\rho^{Pol}(q,g)
&=\int_0^1 q^\prime(t) \, \left[T(1-t) - g(1-t)\right]\,\mathrm{d}t + \int_0^1 \left(F_{X}^{-1}\right)^\prime(t) \, g(1-t)\,\mathrm{d}t.
\end{align}

\medskip

\noindent Reformulating the policyholder's problem \eqref{prob:Pol1} in quantile form, we obtain:
\begin{align}
\label{DisMin}
\min\limits_{q\in \mathcal{Q}_L} \rho^{Pol}( q,g).
\end{align}    

\medskip

\begin{lemma}
\label{le:optimality_qg_iff_Ig} 
For a given pricing distortion function $g$, the feasible quantile $q_{g}(t) \in \cQ_L$ is optimal for \eqref{DisMin} if and only if the indemnity $I_g(x)= x - q_g \left(F_X(x)\right)$ is optimal for \eqref{prob:Pol1}.
\end{lemma}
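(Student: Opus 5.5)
The plan is to show that the map $\Phi: \mathcal{I}_L \to \mathcal{Q}_L$, $\Phi(I) := F^{-1}_{R(X)} = F^{-1}_{X - I(X)}$, is a bijection. Its inverse should be $\Psi(q)(x) := x - q(F_X(x))$, and $\rho^{Pol}(I,g) = \rho^{Pol}(\Phi(I),g)$ for every $I\in\mathcal{I}_L$. Once this holds, optimality transfers directly. If $q_g$ minimizes \eqref{DisMin} and $I\in\mathcal{I}_L$ is arbitrary, then
$$\rho^{Pol}(I_g,g)=\rho^{Pol}(q_g,g)\le \rho^{Pol}(\Phi(I),g)=\rho^{Pol}(I,g),$$
so $I_g$ is optimal. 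Conversely, if $I_g = \Psi(q_g)$ is optimal and $q\in\mathcal{Q}_L$ is arbitrary, then
$$\rho^{Pol}(q_g,g)=\rho^{Pol}(I_g,g)\le\rho^{Pol}(\Psi(q),g)=\rho^{Pol}(q,g).$$

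The identity $\rho^{Pol}(I,g)=\rho^{Pol}(\Phi(I),g)$ is exactly the computation leading to \eqref{eq:pol_quantile_rep}. It follows from the representation \eqref{eq:rho_quantile} together with the comonotonic additivity of Remark \ref{re:quantilecoadd}, which gives $F^{-1}_{I(X)} = F^{-1}_X - F^{-1}_{R(X)}$. So I only need to check the two mapping properties.

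\textbf{$\Phi$ maps into $\mathcal{Q}_L$.} By Remark \ref{re:Rfeasible}, $R\in\mathcal{I}_L$. In particular $R$ is non-decreasing and continuous, so $F^{-1}_{R(X)}(t) = R(F^{-1}_X(t))$. This function is non-decreasing and left-continuous, and it equals $R(0)=0$ at $t=0$ (with $F_X^{-1}(0)=0$ since the range of $X$ starts at $0$). Its a.e. derivative is $R'(F^{-1}_X(t))\,(F^{-1}_X)'(t)$, which lies in $[0,(F^{-1}_X)'(t)]$ because $0\le R'\le 1$ a.e.

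\textbf{$\Psi$ maps into $\mathcal{I}_L$.} For $q\in\mathcal{Q}_L$, set $R_q(x) := q(F_X(x))$. Under Assumption \ref{loss}, $F_X$ is continuous and strictly increasing on $[0,M]$, so $F^{-1}_X(F_X(x)) = x$. Hence $q(t) = R_q(F^{-1}_X(t))$, which gives $\Phi(\Psi(q)) = q$. Combined with $\Psi(\Phi(I)) = I$, which follows from $R(F^{-1}_X(F_X(x))) = R(x)$, the two maps are mutually inverse.

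The main obstacle is showing that $R_q$ is $1$-Lipschitz and non-decreasing. Monotonicity is immediate. For the Lipschitz bound, the intended argument is to take $x_2\le x_1$ and write
$$R_q(x_1)-R_q(x_2)=\int_{F_X(x_2)}^{F_X(x_1)} q'(t)\,dt \le \int_{F_X(x_2)}^{F_X(x_1)} (F^{-1}_X)'(t)\,dt = x_1-x_2.$$
This requires that $q$ and $F^{-1}_X$ are absolutely continuous, so that they are integrals of their derivatives. The paper implicitly works in that regime: quantiles are treated as differentiable a.e. and $\mathcal{Q}_L$ is defined through derivatives. I would make this regularity explicit, interpreting the constraint $0\le q'\le (F_X^{-1})'$ as saying that both $q$ and $F^{-1}_X-q$ are non-decreasing. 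Under that interpretation, the Lipschitz bound holds without any differentiability caveats, since $F^{-1}_X - q$ non-decreasing gives $q(F_X(x_1)) - q(F_X(x_2)) \le x_1 - x_2$ directly. Finally, $R_q(0)=q(F_X(0))=q(0)=0$, using $F_X(0)=0$ from strict increase and non-atomicity at $0$. So $I=\Psi(q)\in\mathcal{I}_L$.
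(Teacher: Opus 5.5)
Your argument is correct and is the paper's argument: in both directions you transfer optimality through the identity $\rho^{Pol}(I,g)=\rho^{Pol}(F^{-1}_{X-I(X)},g)$, which comes from \eqref{eq:rho_quantile} and comonotonic additivity, and you use $q\mapsto x-q(F_X(x))$ for the converse. What you add is the explicit check, which the paper takes for granted, that $\Phi$ and $\Psi$ map $\mathcal{I}_L$ and $\mathcal{Q}_L$ into each other with $\Phi\circ\Psi=\mathrm{id}$. Your reading of $\mathcal{Q}_L$ as requiring both $q$ and $F_X^{-1}-q$ to be non-decreasing is exactly what that check needs: under the literal a.e.\ derivative constraint, a Cantor-type $q$ would lie in $\mathcal{Q}_L$ without coming from any $I\in\mathcal{I}_L$.
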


\begin{proof}
The proof can be found in Appendix \ref{App:le:optimality_qg_iff_Ig}. 
\end{proof}

\medskip

\begin{lemma}
\label{le:qg_characterization}
For a given pricing distortion function $g$, a quantile function $q_g$ is optimal for \eqref{DisMin} if and only if 
\begin{equation}
\label{qg}
\left(q_{{g}}\right)^\prime(t)=
\left\{
\begin{array}[c]{ll}%
0, &g(1-t)<T(1-t),\vspace{0.2cm}\\
\phi_{ g}(t), & g(1-t)=T(1-t),\vspace{0.2cm}\\
\left(F_X^{-1}\right)^\prime(t), & g(1-t)>T(1-t),
\end{array}
\right.
\end{equation}

\medskip

\noindent where $\phi_g(t)\in\left[0,\left(F_X^{-1}\right)^\prime(t)\right]$, for almost every $t \in [0,1]$ such that $g(1-t)=T(1-t)$. 
\end{lemma}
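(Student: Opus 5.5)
Since the second term $\int_0^1 (F_X^{-1})'(t)\,g(1-t)\,dt$ in \eqref{eq:pol_quantile_rep} does not depend on $q$, I will treat \eqref{DisMin} as the problem of minimizing the linear functional
\[
J(q) := \int_0^1 q'(t)\,h(t)\,\mathrm{d}t, \qquad h(t) := T(1-t)-g(1-t),
\]
over $q\in\mathcal{Q}_L$. The feasible set amounts to a pointwise box constraint on the derivative, $0\le q'(t)\le (F_X^{-1})'(t)$ for a.e.\ $t$. Together with $q(0)=0$, the derivative determines $q$: writing $q(t)=\int_0^t q'(s)\,ds$ for $q\in\mathcal{Q}_L$ uses the absolute continuity inherited from the Lipschitz bound relative to $F_X^{-1}$. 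Conversely, any measurable $\psi$ with $0\le\psi\le (F_X^{-1})'$ produces $q(t)=\int_0^t\psi$, which lies in $\mathcal{Q}_L$ because it is nondecreasing, continuous and starts at $0$. Hence \eqref{DisMin} becomes a pointwise linear program in $\psi=q'$.

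Next I compare any feasible $q$ with the candidate. Split $[0,1]$ into $A_- = \{h<0\}$ (where $g(1-t)>T(1-t)$), $A_0=\{h=0\}$ and $A_+=\{h>0\}$. Let $q^*$ be the function whose derivative is $(F_X^{-1})'$ on $A_-$ and $0$ on $A_+$, with anything admissible on $A_0$. Then
\[
J(q)-J(q^*) = \int_{A_+} q'\,h + \int_{A_-}\bigl(q'-(F_X^{-1})'\bigr)h \;\ge 0,
\]
because each integrand is nonnegative. This shows that every $q$ of the form \eqref{qg} attains the minimum value $\int_{A_-}(F_X^{-1})'h$, which proves sufficiency.

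For necessity, let $q$ be optimal. Then $J(q)-J(q^*)=0$, and since both integrands are nonnegative, both must vanish a.e. On $A_+$ this gives $q'h=0$ with $h>0$, so $q'=0$ a.e.\ there. On $A_-$ it gives $(q'-(F_X^{-1})')h=0$ with $h<0$, so $q'=(F_X^{-1})'$ a.e.\ there. On $A_0$ the only restriction is feasibility, which is exactly the statement $\phi_g(t)\in[0,(F_X^{-1})'(t)]$. So $q$ satisfies \eqref{qg} a.e.

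The main obstacle is technical rather than conceptual. First, the integrals must be finite: $h$ is bounded by $1$, and $\int_0^1 (F_X^{-1})' \le M$ because $F_X^{-1}$ is monotone and bounded, which makes $J(q)$ finite for every feasible $q$. Second, the identification of $q\in\mathcal{Q}_L$ with its a.e.\ derivative must be justified. The paper defines $\mathcal{Q}_L$ through derivative bounds, and I would read these as requiring $q$ to be absolutely continuous, being dominated in increments by $F_X^{-1}$. A possible singular part of $F_X^{-1}$ would complicate this, and I would handle it by relying on the paper's remark that the quantiles are differentiable a.e.\ and reading the representation \eqref{eq:pol_quantile_rep} as given. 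Finally, measurability of $A_-$, $A_0$ and $A_+$ follows from the continuity of $T$ and $g$.
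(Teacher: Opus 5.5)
Your proof is correct and takes essentially the paper's route: drop the $q$-independent term, then minimize $\int_0^1 q'(t)\,[T(1-t)-g(1-t)]\,\mathrm{d}t$ pointwise under the box constraint $0\le q'\le (F_X^{-1})'$. The paper simply cites the marginal indemnity function approach of \cite{ASSA201570} for that pointwise step, whereas you write out the comparison argument yourself and prove both sufficiency and necessity, which is exactly what that citation supplies.
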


\begin{proof}
The proof can be found in Appendix \ref{App:le:qg_characterization}. 
\end{proof}

\medskip

\begin{theorem} 
\label{ThStep1}
For a given pricing distortion function $g$, an indemnity function $I_g$ is optimal for the policyholder's problem given in \eqref{prob:Pol1} if and only if it is of the form $I_g(x) = \int_0^x \kappa(y) \, dy$, for all $x \in [0,M]$, with $\kappa: [0,M] \to [0,1]$ satisfying the following:
\begin{equation*}
\kappa(y) =
\left\{
\begin{array}[c]{ll}%
1, &  g\left(\p[X > y]\right) <  T\left(\p[X > y]\right),\vspace{0.2cm}\\
1 - \phi_{g}(F_X(y)) \, f_X(y) , &  g\left(\p[X > y]\right) =  T\left(\p[X > y]\right),\vspace{0.2cm}\\
0, &  g\left(\p[X > y]\right) >  T\left(\p[X > y]\right),
\end{array}
\right.
\end{equation*}

\medskip

\noindent where $\phi_g(t)\in\left[0,\left(F_X^{-1}\right)^\prime(t)\right]$, for almost every $t \in [0,1]$ such that $ g(1-t) =  T(1-t)$, and $f_X$ denotes the probability density function of the loss $X$.
\end{theorem}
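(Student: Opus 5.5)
The plan is to translate the quantile-level characterization of Lemma \ref{le:qg_characterization} into an indemnity-level statement via Lemma \ref{le:optimality_qg_iff_Ig}, and then compute the derivative of $I_g$. By Lemma \ref{le:optimality_qg_iff_Ig}, $I_g$ is optimal for \eqref{prob:Pol1} if and only if $I_g(x) = x - q_g(F_X(x))$ for some $q_g$ optimal for \eqref{DisMin}. Lemma \ref{le:qg_characterization} describes all such $q_g$ through their a.e.\ derivative \eqref{qg}. Since $q_g(0)=0$ and $q_g$ is absolutely continuous, being Lipschitz relative to $F_X^{-1}$, it suffices to differentiate $x\mapsto x - q_g(F_X(x))$ and identify the result with $\kappa$.

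First, I would note that every $I\in\cI_L$ is $1$-Lipschitz with $I(0)=0$. Hence $I(x)=\int_0^x I'(y)\,dy$ with $I'\in[0,1]$ a.e., so representing optimal indemnities as $\int_0^x\kappa$ loses nothing.

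Next, apply the chain rule at a.e.\ $y$, where $F_X$ is differentiable with density $f_X(y)$ and $q_g$ is differentiable at $t=F_X(y)$:
$$\kappa(y) = 1 - q_g'(F_X(y))\, f_X(y).$$
Set $t=F_X(y)$, so that $1-t = \p[X>y]$ (using $\p[X=y]=0$, which follows from continuity considerations under Assumption \ref{loss} and non-atomicity). The three cases of \eqref{qg} then transfer directly:
\begin{itemize}
\item if $g(1-t)<T(1-t)$, then $q_g'=0$ and $\kappa=1$;
\item if $g(1-t)>T(1-t)$, then $q_g'=(F_X^{-1})'(t)$, and the identity $(F_X^{-1})'(F_X(y))\,f_X(y)=1$ gives $\kappa=0$;
\item on the equality set, $\kappa = 1-\phi_g(F_X(y))f_X(y)$ with $\phi_g\in[0,(F_X^{-1})'(t)]$, so that $\kappa\in[0,1]$.
\end{itemize}

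The converse runs the same computation backwards. Given $\kappa$ of the stated form, define $q(t) = F_X^{-1}(t) - I_g(F_X^{-1}(t))$. Its derivative matches \eqref{qg}, and $q\in\cQ_L$, so $q$ is optimal by Lemma \ref{le:qg_characterization}. Lemma \ref{le:optimality_qg_iff_Ig} then gives the optimality of $I_g$.

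The main obstacle I expect is the measure-theoretic bookkeeping in the chain rule. Specifically, I need to justify that null sets in $t$ correspond to null sets in $y$ under $F_X$ and $F_X^{-1}$, and that the inverse-derivative identity holds a.e. Assumption \ref{loss} gives strict monotonicity but not absolute continuity of $F_X$. I would handle this by working in $t$ on $[0,1]$ with Lebesgue measure, and by using that $x=F_X^{-1}(F_X(x))$ together with the fact that the composition of the Lipschitz-type function $q_g$ with $F_X$ is Lipschitz. On sets where $f_X$ fails to exist, the a.e.\ statements are understood accordingly, as the paper implicitly does in writing $f_X$.
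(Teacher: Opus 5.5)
Your proposal is correct and follows the paper's route: the paper proves the theorem in one line by combining Lemmata \ref{le:optimality_qg_iff_Ig} and \ref{le:qg_characterization}. Your chain-rule computation $\kappa(y)=1-q_g'(F_X(y))\,f_X(y)$ makes explicit the translation the paper leaves implicit, under the same tacit regularity (existence of $f_X$, absolute continuity of $F_X^{-1}$).

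One small correction: $1-F_X(y)=\p[X>y]$ holds by the definition of $F_X$, so you need no atomlessness argument there. Your stated justification would not work in any case, since strict monotonicity of $F_X$ and non-atomicity of $\Omega$ do not rule out atoms of $X$.
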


\begin{proof}
The proof follows immediately from Lemmata \ref{le:optimality_qg_iff_Ig} and \ref{le:qg_characterization}. 
\end{proof}

\medskip

Theorem \ref{ThStep1} characterizes the set of optimal indemnity functions for a given pricing distortion function $g$, in terms of the policyholder's marginal indemnification. Specifically, full indemnification is optimal when the insurer's pricing distortion assigns less weight to tail probabilities than the policyholder's distortion function $T$. The policyholder retains the entire loss if the insurer's pricing distortion overweights the tail probability compared to the policyholder's distortion. Finally, when the policyholder's distortion is equal to the insurer's pricing distortion at a given tail probability, the policyholder may receive partial coverage, as long as feasibility is maintained. 

\medskip

This structural characterization is consistent with \cite{ASSA201570}, who considers a reinsurance problem in which the premium principle is fixed and distortion based, and characterizes the optimal contract of the policyholder. In contrast, our result arises as the policyholder's best response within a Stackelberg framework, where the pricing distortion is a strategic choice of the insurer. Moreover, in the absence of strategic interaction and when the policyholder's problem only is considered, our model reduces to the setting analyzed in \cite{ASSA201570}.

%============================================
\medskip
\subsection{The Insurer's Problem}
\label{SecIN}
The optimal indemnity characterized in Theorem \ref{ThStep1} is not unique. The insurer's objective is to identify a market mechanism $(I^*_{g^*},g^*)$ that solves the following problem:
\begin{align}
\label{IN2}
\max_g \, V^{In}(I_g,g),
\ \text{such that} \
I_g \in \arg\min_{I\in\mathcal I_L} \, \rho^{Pol}(I,g).
\end{align}

\medskip

\begin{lemma}
The market mechanism $(I^*_{g^*},g^*)$ is a Stackelberg equilibrium if and only if it is optimal for the insurer's problem in \eqref{IN2}.
\end{lemma}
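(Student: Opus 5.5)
This lemma amounts to unpacking Definition \ref{BO} and problem \eqref{IN2}; I would prove the two implications directly. The main point is that \eqref{IN2} is a maximization over pairs $(I_g,g)$ whose constraint set is exactly the set of pairs allowed in condition (2) of Definition \ref{BO}. Let
$$\mathcal{A}:=\Big\{(I,g)\,:\, I\in \arg\min_{\bar I\in\mathcal I_L}\rho^{Pol}(\bar I,g)\Big\}$$
denote the set of feasible pairs. Problem \eqref{IN2} asks for a pair in $\mathcal A$ maximizing $V^{In}$ over $\mathcal A$.

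\emph{Only if.} Suppose $(I^*_{g^*},g^*)$ is a Stackelberg equilibrium.
\begin{enumerate}
\item Condition (1) of Definition \ref{BO} gives $I^*_{g^*}\in\arg\min_{I\in\mathcal I_L}\rho^{Pol}(I,g^*)$, so $(I^*_{g^*},g^*)\in\mathcal A$ is feasible for \eqref{IN2}.
\item Condition (2) states that $V^{In}(I^*_{g^*},g^*)\ge V^{In}(I,g)$ for every $(I,g)\in\mathcal A$.
\end{enumerate}
Together these say $(I^*_{g^*},g^*)$ attains the maximum in \eqref{IN2}.

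\emph{If.} Suppose $(I^*_{g^*},g^*)$ is optimal for \eqref{IN2}.
\begin{enumerate}
\item Feasibility in \eqref{IN2} means $I^*_{g^*}$ minimizes $\rho^{Pol}(\cdot,g^*)$ over $\mathcal I_L$, which is condition (1) of Definition \ref{BO}.
\item Optimality gives $V^{In}(I^*_{g^*},g^*)\ge V^{In}(I_g,g)$ for every feasible pair, i.e.\ for every $(I,g)\in\mathcal A$, which is condition (2).
\end{enumerate}

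The only place needing care is the non-uniqueness of the best response noted before \eqref{IN2}. I would read the maximization in \eqref{IN2} as being over both $g$ and the selection $I_g$ from the argmin, not over $g$ with some fixed selection. Under that reading the insurer may pick its preferred best response, which is exactly what condition (2) allows since it ranges over all $(I,g)$ with $I$ in the argmin. I would state this reading explicitly. Feasibility of the pairs compared in condition (2) is guaranteed by the argmin characterization in Theorem \ref{ThStep1}, but the argument does not depend on its explicit form; it is purely definitional.
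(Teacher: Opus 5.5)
Your proposal is correct and matches the paper, which simply says the lemma follows immediately from Definition \ref{BO}; your check of conditions (1) and (2) against feasibility and optimality in \eqref{IN2} is that same argument written out. Your reading of \eqref{IN2} as a joint maximization over $g$ and the selected best response $I_g$ is the right one, and it is consistent with how the paper uses the problem.
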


\begin{proof}
The proof follows immediately from Definition \ref{BO}.
\end{proof}

The insurer's problem \eqref{IN2} can be reformulated using quantile functions, similarly to the policyholder's problem analyzed in Subsection \ref{SecPH}. Consider a market mechanism $(I,g)$, using Remark \ref{re:quantilecoadd}, the premium can be written as
$$
\Pi_g \big( I(X) \big)=
\int_0^1 \left[\left(F_{X}^{-1}\right)^\prime(t) - \left(F_{R(X)}^{-1}\right)^\prime(t) 
\right]
\, g(1-t)\,\mathrm{d}t.
$$

\noindent Moreover, 
\begin{align*}
\mathbb{E}\left[I(X)\right] 
&=\int_0^1 \left[F_{X}^{-1}(t) - F_{R(X)}^{-1}(t) \right] \,\mathrm{d}t.
\end{align*}

\noindent Substituting the quantile representations of the premium and the expected indemnity into $V^{In}( I_g,g)$ in \eqref{InDis}, where $R_g$ denotes the retention function associated with $I_g$,   yields:
\begin{align*}
V^{In}( I_g,g)
&=\int_0^1 \left[\left(F_{X}^{-1}\right)^\prime(t) - \left(F_{R_g(X)}^{-1}\right)^\prime(t) 
\right]
\, g(1-t)\,\mathrm{d}t 
-\int_0^1 \left[F_{X}^{-1}(t) - F_{R_g(X)}^{-1}(t) 
\right]
\,\mathrm{d}t.
\end{align*}

\noindent For a given quantile $q \in \mathcal{Q}_L$, this expression reduces to:
\begin{align*}
V^{In}(q,g)
&=\int_0^1 \left[\left(F_{X}^{-1}\right)^\prime(t) - q^\prime(t) 
\right]
\, g(1-t)\,\mathrm{d}t 
-\int_0^1 \left[F_{X}^{-1}(t) - q(t) 
\right]
\,\mathrm{d}t.
\end{align*}

\medskip

\noindent Hence, the insurer's problem can be equivalently written in quantile form as follows:
\begin{align}\label{IN3}
\max\limits_{g} \, V^{In}( q_g,g), \ \text{such that $q_g \in \arg\min_{q \in \mathcal Q_L} \rho^{Pol}(q,g)$}. 
\end{align}

\medskip

The following result suggests that solving the insurer's problem \eqref{IN2} is equivalent to solving \eqref{IN3}, providing a one-to-one correspondence between optimal solutions in the quantile space and optimal mechanisms.

\medskip

\begin{lemma}
\label{LemEq2}
$(q_g,g)$ is optimal for \eqref{IN3} if and only if $(I_g,g)$ is optimal for \eqref{IN2}, where $I_g(x) = x - q_g\left(F_X(x)\right)$, for all $x \in [0,M]$.
\end{lemma}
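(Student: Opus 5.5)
We will argue through the bijection between admissible indemnities and admissible retention quantiles, and check that both the feasibility constraint and the objective of \eqref{IN2} and \eqref{IN3} are carried along this bijection. First, fix $g$. Define $\Phi:\mathcal{I}_L\to\mathcal{Q}_L$ by $\Phi(I)=F^{-1}_{R(X)}$, where $R(x)=x-I(x)$. By Remark \ref{re:Rfeasible}, $R\in\mathcal I_L$. Assumption \ref{loss} gives $X=F_X^{-1}(U)$ a.s., so $F^{-1}_{R(X)}=R\circ F_X^{-1}$. This function vanishes at $0$ and has derivative $R'(F_X^{-1})\,(F_X^{-1})'\in[0,(F_X^{-1})']$, so it lies in $\mathcal Q_L$.

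The inverse map is $q\mapsto I(x)=x-q(F_X(x))$. We will check that this $I$ lies in $\mathcal I_L$ using the derivative bound $0\le q'\le (F_X^{-1})'$ together with $F_X\circ F_X^{-1}=\mathrm{id}$ (which holds since $F_X$ is continuous and strictly increasing). These are the same computations used for Lemma \ref{le:optimality_qg_iff_Ig}, so we may simply invoke that lemma's underlying correspondence.

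Second, we verify that the objectives agree: $\rho^{Pol}(I,g)=\rho^{Pol}(\Phi(I),g)$ and $V^{In}(I,g)=V^{In}(\Phi(I),g)$. The first identity is exactly the derivation leading to \eqref{eq:pol_quantile_rep}. The second is the displayed computation of $V^{In}(q,g)$, which uses the quantile representation of $\Pi_g$, the identity $\E[Z]=\int_0^1F_Z^{-1}$, and comonotonic additivity (Remark \ref{re:quantilecoadd}).

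Consequently, the constraint sets correspond exactly: $I_g\in\arg\min_{\mathcal I_L}\rho^{Pol}(\cdot,g)$ if and only if $q_g=\Phi(I_g)\in\arg\min_{\mathcal Q_L}\rho^{Pol}(\cdot,g)$. This is also the content of Lemma \ref{le:optimality_qg_iff_Ig}.

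Third, the argument closes by comparing values. Suppose $(q_g,g)$ is optimal for \eqref{IN3}, and take any feasible $(I_{\tilde g},\tilde g)$ for \eqref{IN2}. Then $\Phi(I_{\tilde g})$ is feasible for \eqref{IN3} with the same value, so
\[
V^{In}(I_g,g)=V^{In}(q_g,g)\ge V^{In}(\Phi(I_{\tilde g}),\tilde g)=V^{In}(I_{\tilde g},\tilde g).
\]
The converse is symmetric, using $\Phi^{-1}$.

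The only delicate step is confirming that $I_g(x)=x-q_g(F_X(x))$ indeed satisfies $\Phi(I_g)=q_g$ almost everywhere, so that the two problems are indexed by the same objects. This holds because $F_X$ is continuous and strictly increasing, which makes $q_g(F_X(F_X^{-1}(t)))=q_g(t)$.
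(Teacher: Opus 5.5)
Your proposal is correct and follows the paper's argument, which just defers to the proof of Lemma \ref{le:optimality_qg_iff_Ig}: move feasible points across the indemnity--quantile correspondence, note that $\rho^{Pol}$ and $V^{In}$ take the same values on corresponding points (so the argmin constraints match by that lemma), and compare objective values. One caveat, which the paper shares: the a.e.\ bound $0\le q'\le (F_X^{-1})'$ alone does not put $x\mapsto x-q(F_X(x))$ in $\mathcal I_L$ (a Cantor-type $q$ breaks the $1$-Lipschitz property), so your ``check'' really rests on reading $\mathcal Q_L$ as the image of $\mathcal I_L$, i.e.\ with $q$ absolutely continuous, as the paper tacitly does.
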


\begin{proof}
The proof is similar to that of Lemma \ref{le:optimality_qg_iff_Ig}.
\end{proof}

\medskip

\begin{theorem}
\label{OptDis} 
$(q^*_{g^*},{g}^*)$ is optimal for \eqref{IN3} if and only if the following two conditions hold: 
\begin{enumerate}
\item  The optimal pricing distortion $g^*$ satisfies $g^*(t)=1-\tilde{g} ^*(1-t)$ for all $t \in [0,1]$, where the optimal pricing conjugate $\tilde g ^*$ is given by:
\begin{equation}
\label{tilde}
\tilde g^*(t) = \left\{ 
\begin{array}[c]{ll}%
\tilde T(t), &\tilde T(t)<t,\vspace{0.3cm}\\
\in\left[\,\sup \left\{z <t;\ \tilde g^*(z) \right\}, \tilde T(t)\right], & \tilde T(t)\geq t,
\end{array}\right.  
\end{equation}

\medskip

\noindent and $\tilde T$ is the conjugate distortion function of $T$. 

\bigskip

\item The optimal quantile $q^*_{g^*}$ satisfies:
\begin{equation}
\label{qast}
\left(q^*_{{g}^*}\right)^\prime(t)=
\left\{
\begin{array}[c]{ll}%
0, &\tilde T(t)<t,\vspace{0.2cm}\\
\phi_{ g^*}(t)\, & t\in\{z; \ \tilde{g}^*(z)= \tilde T(z)=z\},\vspace{0.2cm}\\
\left(F_X^{-1}\right)^\prime(t), &t\in\{z; \ \tilde{g}^*(z)<\tilde T(z)=z\}\cup\{z; \ \tilde T(z)>z\},
\end{array}
\right.  
\end{equation}

\medskip

\noindent where $\phi_{g^*}(t)\in\left[0,\left(F_X^{-1}\right)^\prime(t)\right]$ for a.e.\ $t\in[0,1]$ such that $\tilde{g}^*(t)= \tilde T(t)=t.$
\end{enumerate}

\bigskip

\noindent Moreover, the insurer's expected profit under $(q^*_{g^*}, g^*)$ is given by:
\begin{equation} \label{eq:exp_profit}
V^{In}\left( q^*_{g^*}, g^*\right) = \int_0^1\left(F_X^{-1}\right)^\prime(t)\,(t-\tilde T(t))\One_{\{\tilde T(t)<t\}} \, \mathrm{d}t. 
\end{equation}
\end{theorem}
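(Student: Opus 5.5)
The plan is to reduce the bilevel problem \eqref{IN3} to a pointwise bound, working throughout with conjugates. Since $g(1-t)=1-\tilde g(t)$ and $T(1-t)=1-\tilde T(t)$, the three regimes of Lemma \ref{le:qg_characterization} become $\tilde g(t)>\tilde T(t)$ (where $q_g'=0$, full marginal coverage), $\tilde g(t)=\tilde T(t)$ (where $q_g'=\phi_g$ is arbitrary), and $\tilde g(t)<\tilde T(t)$ (where $q_g'=(F_X^{-1})'$, no coverage). Write $h(t):=(F_X^{-1})'(t)-q'(t)\in[0,(F_X^{-1})'(t)]$ for the marginal indemnity in quantile space.

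\textbf{Step 1 (rewriting the profit).} Since $F_X^{-1}(0)=q(0)=0$, integration by parts gives
$$\int_0^1\left[F_X^{-1}(t)-q(t)\right]\mathrm{d}t=\int_0^1 h(t)(1-t)\,\mathrm{d}t .$$
Substituting into the quantile form of $V^{In}(q,g)$ yields
$$V^{In}(q,g)=\int_0^1 h(t)\left[g(1-t)-(1-t)\right]\mathrm{d}t=\int_0^1 h(t)\left(t-\tilde g(t)\right)\mathrm{d}t .$$

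\textbf{Step 2 (pointwise upper bound).} Let $h$ be any best response to $g$, as described by Lemma \ref{le:qg_characterization}, and bound the integrand pointwise by $(F_X^{-1})'(t)\,(t-\tilde T(t))\One_{\{\tilde T(t)<t\}}$, treating the three regimes separately.
\begin{itemize}
\item If $\tilde g(t)>\tilde T(t)$, then $h=(F_X^{-1})'$ and $t-\tilde g(t)<t-\tilde T(t)$.
\item If $\tilde g(t)=\tilde T(t)$, then $h(t-\tilde T(t))\le (F_X^{-1})'(t)\,(t-\tilde T(t))^+$, because $0\le h\le (F_X^{-1})'$.
\item If $\tilde g(t)<\tilde T(t)$, then $h=0$.
\end{itemize}
Integrating gives $V^{In}(q_g,g)\le$ the right-hand side of \eqref{eq:exp_profit} for every feasible pair $(q_g,g)$.

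\textbf{Step 3 (attainment and sufficiency).} Next I would check that any pair described by \eqref{tilde} and \eqref{qast} attains this bound.
\begin{itemize}
\item On $\{\tilde T<t\}$ we have $\tilde g^*=\tilde T$. Lemma \ref{le:qg_characterization} then permits the choice $\phi=0$, i.e.\ $h=(F_X^{-1})'$, so the integrand equals the bound.
\item On $\{\tilde T\ge t\}$, \eqref{tilde} gives $\tilde g^*\le\tilde T$.
\begin{itemize}
\item Where the inequality is strict, $h=0$.
\item Where $\tilde g^*=\tilde T>t$, I set $\phi=(F_X^{-1})'$, so again $h=0$.
\item Where $\tilde g^*=\tilde T=t$, the factor $t-\tilde g^*$ vanishes, so $\phi$ is free.
\end{itemize}
\end{itemize}
This gives \eqref{eq:exp_profit}, and Lemma \ref{le:qg_characterization} confirms that these $q^*$ are best responses. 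The lower bound $\sup\{\tilde g^*(z):z<t\}$ in \eqref{tilde} is exactly what keeps $\tilde g^*$ non-decreasing, so that $g^*$ is a distortion function.

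\textbf{Step 4 (necessity).} Conversely, optimality forces equality a.e.\ in Step 2. On $\{\tilde T<t\}$, the strict inequality in the first regime and the vanishing of $h$ in the third both lose profit, so we need $\tilde g^*=\tilde T$ together with $h=(F_X^{-1})'$. On $\{\tilde T\ge t\}$, the regime $\tilde g^*>\tilde T\ge t$ would force full coverage, which gives zero or negative contribution; it is strictly negative when $\tilde T>t$, and when $\tilde T=t$ it contradicts \eqref{tilde}. Hence $\tilde g^*\le \tilde T$ there, with $h=0$ unless $\tilde g^*=t$. This recovers \eqref{tilde} and \eqref{qast}.

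\textbf{Main obstacle.} I expect the necessity direction to be the delicate part. It holds only up to null sets, so I need $(F_X^{-1})'>0$ a.e. I would try to derive this from Assumption \ref{loss} together with boundedness. If that fails, the characterization has to be read modulo sets where $(F_X^{-1})'=0$. A second concern is whether the pointwise prescriptions in \eqref{tilde} can be realised by a \emph{differentiable} non-decreasing $\tilde g^*$. Here I would take $\tilde g^*=\min(\tilde T,\cdot)$-type choices as the canonical element, and verify separately that the interval in \eqref{tilde} is nonempty, i.e.\ $\sup_{z<t}\tilde g^*(z)\le\tilde T(t)$, using monotonicity of $\tilde T$.
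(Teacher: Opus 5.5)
Your proposal is correct, and its core is the same pointwise analysis as the paper's. Both arguments write the profit as $\int_0^1[(F_X^{-1})'(t)-q'(t)]\,(t-\tilde g(t))\,\mathrm{d}t$ and compare the integrand with $(F_X^{-1})'(t)\,(t-\tilde T(t))^+$, splitting by the signs of $\tilde g-\tilde T$ and $\tilde T(t)-t$. Your Step 2 is essentially the paper's sufficiency comparison \eqref{InOpt}, done in one pass.

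Where you differ is necessity. The paper fixes $g$, picks the profit-maximizing $\phi$ on $\cA_2$, and reduces to $m(y;t)$. It then argues by replacing $h^*$ with the pointwise maximizer on a neighborhood of a bad point $t_0$. It never checks that this perturbed $\tilde h$ comes from a non-decreasing differentiable $\tilde g$, and in general it does not. You instead read necessity off the equality case of a bound that holds for every best response. That only needs one feasible pair attaining the bound, so your route is tighter. For that pair take $g^*=T$, with $\phi=0$ on $\{\tilde T(t)<t\}$ and $\phi=(F_X^{-1})'$ on $\{\tilde T(t)>t\}$. Avoid $\min(\tilde T,\cdot)$: it is generally not differentiable where $\tilde T$ crosses the diagonal, so it is not a distortion function in the paper's sense.

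There are two small repairs.

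\textbf{The case $\tilde g^*>\tilde T$ in Step 4.} On $\{\tilde g^*>\tilde T\}$ the contribution $(F_X^{-1})'(t)\,(t-\tilde g^*(t))$ lies strictly below the bound wherever $(F_X^{-1})'(t)>0$. This includes the case $\tilde T(t)=t$, so that case is excluded directly. Appealing to \eqref{tilde} there would be circular in the necessity direction, and it is unnecessary.

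\textbf{Your ``main obstacle''.} The fact $(F_X^{-1})'>0$ a.e.\ does not follow from Assumption~\ref{loss}. An absolutely continuous, strictly increasing function can have zero derivative on a fat Cantor set. You do not need it, however.
\begin{itemize}
\item For \eqref{qast}: wherever $(F_X^{-1})'=0$, feasibility forces $q'=0=(F_X^{-1})'$, which agrees with every branch.
\item For \eqref{tilde}: use continuity of $\tilde g^*$ and $\tilde T$. The sets $\{\tilde g^*>\tilde T\}$ and $\{t:\tilde T(t)<t,\ \tilde g^*(t)<\tilde T(t)\}$ are open, and equality a.e.\ forces $(F_X^{-1})'=0$ a.e.\ on them. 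The quantile formulation presumes $F_X^{-1}$ is the integral of its derivative, so $F_X^{-1}$ would be constant on any interval these sets contain. That contradicts strict monotonicity of the quantile, which the paper takes for granted. Hence both sets are empty, and \eqref{tilde} holds at every $t$, not merely a.e.
\end{itemize}
The strict inequalities in the paper's own perturbation argument rely on the same implicit regularity.
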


\begin{proof}
The proof can be found in Appendix \ref{AppProofThOptDis}.
\end{proof}

\medskip

Theorem \ref{OptDis} provides a complete characterization of the Stackelberg equilibrium by identifying the optimal pricing distortion $g^*$ and the induced optimal quantile $q^*_{g^*}$. The following corollary explicitly provides the characterization of Stackelberg equilibria in terms of market mechanisms of the form $(I_g, g)$. 

\medskip

\begin{corollary}
\label{ThStep2}
The market mechanism $(I^*_{g^*}, g^*)$ is optimal for \eqref{IN2} if and only if the following two conditions hold:
\medskip
\begin{enumerate}
\item The optimal pricing distortion $g^*$ is given by:
\begin{equation}
\label{OptG}
 g^*(t) =
\left\{
\begin{array}[c]{ll}%
T(t), & T(t)>t,\vspace{0.3cm}\\
\in\left[\,  T(t), \,\inf \left\{z >t :  g^*(z) \right\}\right], &  T(t)\leq t,
\end{array}
\right.  
\end{equation}

\medskip

\item The optimal indemnity satisfies $I^*_{g^*}(x) = \int_0^x \kappa^*(y) \, \mathrm{d}y$ for all $x \in [0,M]$, where $ \kappa^*: [0, M] \to [0, 1] $ is defined as follows:
\medskip
$$
\kappa^*(y) :=
\begin{cases}
1, & \mathbb{P}[X > y] < T(\mathbb{P}[X > y]), \\[0.3em]
1 - \phi_{g^*}(F_X(y)) \, f_X(y), & g^*(\mathbb{P}[X > y]) = T(\mathbb{P}[X > y]) = \mathbb{P}[X > y], \\[0.3em]
0, &  g^*(\mathbb{P}[X > y]) > T(\mathbb{P}[X > y]) = \mathbb{P}[X > y],\text{or } \mathbb{P}[X > y] > T(\mathbb{P}[X > y]),
\end{cases}
$$

\medskip

\noindent where $\phi_{g^*}(t) \in \left[0, \left(F_X^{-1}\right)'(t)\right]$, for a.e.\ $t \in [0,1]$ such that $g^*(1 - t) = T(1 - t) = 1 - t$.
\end{enumerate}
\end{corollary}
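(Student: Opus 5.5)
This corollary is a direct translation of Theorem \ref{OptDis} from quantile space into market-mechanism space. The proof combines Lemma \ref{LemEq2}, which gives the correspondence between optimal $(q_g,g)$ for \eqref{IN3} and optimal $(I_g,g)$ for \eqref{IN2}, with the change of variables $s=1-t$ that converts conjugate distortions back into distortions. By Lemma \ref{LemEq2}, $(I^*_{g^*},g^*)$ is optimal for \eqref{IN2} if and only if $(q^*_{g^*},g^*)$ is optimal for \eqref{IN3}, where $q^*_{g^*}(F_X(x))=x-I^*_{g^*}(x)$. Theorem \ref{OptDis} characterizes the latter, so it remains to rewrite conditions (1) and (2) of that theorem.

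\textbf{Pricing distortion.} I will use $g^*(s)=1-\tilde g^*(1-s)$ and $T(s)=1-\tilde T(1-s)$. Writing $t=1-s$, the condition $\tilde T(t)<t$ is equivalent to $T(s)>s$, and on this set $\tilde g^*=\tilde T$ gives $g^*(s)=T(s)$. On the complementary set $\tilde T(t)\ge t$, i.e.\ $T(s)\le s$, the admissible interval $[\sup\{\tilde g^*(z):z<t\},\tilde T(t)]$ maps under $v\mapsto 1-v$ to $[T(s),\,1-\sup_{z<t}\tilde g^*(z)]$. Since $z<t$ corresponds to $z'=1-z>s$, this upper endpoint is $\inf_{z'>s} g^*(z')$. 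This is exactly \eqref{OptG}: the constraint expresses that $g^*$ stays non-decreasing while lying above $T$.

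\textbf{Indemnity.} I will differentiate $I^*(x)=x-q^*(F_X(x))$ to get $\kappa^*(y)=1-(q^*)'(F_X(y))f_X(y)$, valid a.e.\ by Assumption \ref{loss}. With $t=F_X(y)$, so that $1-t=\p[X>y]$, each case of \eqref{qast} transfers as follows. If $(q^*)'=0$ then $\kappa^*=1$; this occurs when $\tilde T(t)<t$, i.e.\ $T(\p[X>y])>\p[X>y]$. If $(q^*)'=(F_X^{-1})'(t)$, then $(F_X^{-1})'(F_X(y))f_X(y)=1$ gives $\kappa^*=0$; this occurs when $T(\p[X>y])<\p[X>y]$, or when $T(\p[X>y])=\p[X>y]$ with $g^*(\p[X>y])>T(\p[X>y])$, the latter being the translation of $\tilde g^*<\tilde T=\mathrm{id}$. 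The middle case, with $\phi_{g^*}$, arises exactly when $g^*=T=\mathrm{id}$ at $\p[X>y]$. Conversely, integrating such a $\kappa^*$ recovers a quantile satisfying \eqref{qast}, and feasibility in $\cI_L$ follows from $\kappa^*\in[0,1]$. This establishes both directions.

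\textbf{Main obstacle.} The only delicate points are bookkeeping. The first is the inverse-function identity $(F_X^{-1})'(F_X(y))f_X(y)=1$ a.e., together with the null sets where derivatives fail to exist. The second is correctly converting the sup-over-left-neighbours bound on $\tilde g^*$ into the inf-over-right-neighbours bound on $g^*$. I would handle the first by appealing to Assumption \ref{loss} and the a.e.\ differentiability remarks, noting that $\kappa^*$ only matters up to Lebesgue-null sets.
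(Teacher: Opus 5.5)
Your proposal is correct and follows the paper's own route. The paper proves this corollary simply by combining Lemma~\ref{LemEq2} with Theorem~\ref{OptDis}, and your two translation steps are exactly what the paper leaves implicit: the change of variables $s=1-t$ (turning $\tilde T(t)<t$ into $T(s)>s$, and the left-supremum bound on $\tilde g^*$ into the right-infimum bound on $g^*$), and the formula $\kappa^*(y)=1-(q^*)'(F_X(y))f_X(y)$, which uses the a.e.\ identity $(F_X^{-1})'(F_X(y))\,f_X(y)=1$ that the paper likewise assumes tacitly, already in Theorem~\ref{ThStep1}.
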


\begin{proof}
The proof follows immediately from Lemma \ref{LemEq2} and Theorem \ref{OptDis}.
\end{proof}

\medskip

\begin{theorem}\label{CompStatWRA}
Consider two policyholders whose respective distortion functions $T_1$ and $T_2$ satisfy $T_1(t) \leq T_2(t)$ for all $t \in [0,1]$. Let $(I_{g^*_1}^*, g_1^*)$ and $(I_{g^*_2}^*, g_2^*)$ denote the corresponding Stackelberg equilibria. Then, the following holds.
\begin{enumerate}
\item $I_{g^*_1}^*(x) \le I_{g^*_2}^*(x)$, for all $x \in [0,M]$. 
\item $V^{In}(I_{g^*_1}^*, g_1^*) \le V^{In}(I_{g^*_2}^*, g_2^*)$
\end{enumerate}
\end{theorem}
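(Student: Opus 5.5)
The plan is to compare the two equilibria through the explicit characterizations already obtained. We use the profit formula \eqref{eq:exp_profit} of Theorem \ref{OptDis} for part (2), and the marginal indemnity $\kappa^*$ of Corollary \ref{ThStep2} for part (1).

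\emph{Part (2).} First observe that $T_1 \le T_2$ on $[0,1]$ reverses under conjugation: $\tilde T_1(t) = 1 - T_1(1-t) \ge 1 - T_2(1-t) = \tilde T_2(t)$ for all $t$. Next write the integrand in \eqref{eq:exp_profit} as $\left(F_X^{-1}\right)^\prime(t)\,\big(t - \tilde T_i(t)\big)^+$. Since $x \mapsto (t-x)^+$ is non-increasing, we get $(t-\tilde T_1(t))^+ \le (t-\tilde T_2(t))^+$ pointwise. Because $\left(F_X^{-1}\right)^\prime \ge 0$ a.e., integrating gives $V^{In}(I^*_{g_1^*},g_1^*) \le V^{In}(I^*_{g_2^*},g_2^*)$. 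This part is routine. Note that the formula holds for every equilibrium, so the non-uniqueness of equilibria plays no role here.

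\emph{Part (1).} Since $I^*_{g_i^*}(x) = \int_0^x \kappa_i^*(y)\,\mathrm{d}y$, it suffices to show $\kappa_1^*(y) \le \kappa_2^*(y)$ for a.e.\ $y \in [0,M]$. Write $s = \mathbb{P}[X>y]$ and split according to the three regimes of Corollary \ref{ThStep2} for policyholder $1$.
\begin{itemize}
\item If $T_1(s) > s$, then $T_2(s) \ge T_1(s) > s$, so $\kappa_1^*(y) = \kappa_2^*(y) = 1$.
\item If $\kappa_1^*(y) = 0$, the inequality is trivial because $\kappa_2^* \ge 0$.
\item The remaining regime is $g_1^*(s) = T_1(s) = s$. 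If $T_2(s) > s$, then $\kappa_2^*(y) = 1 \ge \kappa_1^*(y)$, since $\kappa_1^* \in [0,1]$.
\end{itemize}

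The main obstacle is the residual set $\{y : T_1(s) = T_2(s) = s\}$. On this set both policyholders are locally risk neutral, and the equilibria are not unique: $g_i^*$ may sit anywhere in the admissible band of \eqref{OptG}, and $\phi_{g_i^*}$ is arbitrary. Hence $\kappa_1^*$ and $\kappa_2^*$ can be selected independently there, and pointwise comparison can fail for arbitrary selections.

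I would handle this by comparing equilibria that use a common selection on this tie set. That is, take $g_2^* = g_1^*$ and $\phi_{g_2^*} = \phi_{g_1^*}$ there; this is admissible for both policyholders because the constraints in \eqref{OptG} and Theorem \ref{ThStep1} coincide on that set. Then $\kappa_1^* = \kappa_2^*$ on the tie set, which yields $\kappa_1^* \le \kappa_2^*$ a.e. Integrating from $0$ to $x$ then gives $I^*_{g_1^*}(x) \le I^*_{g_2^*}(x)$ for all $x \in [0,M]$. A second case in which the comparison goes through for any equilibrium pair is when $\{s : T_1(s) = T_2(s) = s\}$ has null preimage under $y \mapsto \mathbb{P}[X>y]$ (for instance, if $T_1$ meets the identity only on a null set). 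Without one of these two qualifications, pointwise comparison of arbitrarily selected equilibria can fail on the tie set.
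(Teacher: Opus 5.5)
Your proof follows the paper's route. Part (2) is exactly the paper's argument: conjugation reverses $T_1\le T_2$, and the integrand $(F_X^{-1})'(t)\,(t-\tilde T_i(t))\One_{\{\tilde T_i(t)<t\}}$ of \eqref{eq:exp_profit} is non-increasing in $\tilde T_i(t)$. Part (1) is the same comparison of the marginal indemnities from Corollary~\ref{ThStep2}. The difference is that the paper passes directly from Corollary~\ref{ThStep2} to ``$\kappa_1^*\le\kappa_2^*$ a.e.'' without examining the set where $T_1(s)=T_2(s)=s$. Your objection there is correct, and in fact (1) is false as literally stated for arbitrary equilibria. Take $T_1=T_2=\mathrm{id}$ and $g^*=\mathrm{id}$. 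The policyholder's objective equals $\E[X]$ for every $I\in\cI_L$, and every $I$ yields profit $\Pi_{\mathrm{id}}(I(X))-\E[I(X)]=0$, which is the maximal profit by \eqref{eq:exp_profit}. So $(x\mapsto x,\mathrm{id})$ and $(0,\mathrm{id})$ are both Stackelberg equilibria, and assigning the first to policyholder~1 and the second to policyholder~2 violates (1). A selection qualification is therefore unavoidable. Your case analysis off the tie set is correct, and so is your null-preimage sufficient condition.

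One correction to your fix. You seem to claim that, for an arbitrary equilibrium $g_1^*$, one can always take $g_2^*=g_1^*$ on the tie set; that is not true. The band in \eqref{OptG} is capped by $\inf_{z>t}g^*(z)$, which involves values of $g^*$ off the tie set. For example, let $T_1=\mathrm{id}$, and let $T_2=\mathrm{id}$ except on an interval $(b,c)$ where $T_2>\mathrm{id}$. Then $g_1^*(s)=2s-s^2$ is an optimal pricing distortion for policyholder~1 with $g_1^*>b$ near $b$. But $g_2^*=T_2$ is forced on $(b,c)$, so monotonicity gives $g_2^*\le T_2(b)=b$ on $[0,b]$.

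This is harmless, since wherever $g_1^*(s)>s=T_1(s)$ you already have $\kappa_1^*=0$; what must be matched is $\kappa$, not $g$. The clean version uses $g_2^*=T_2$. This is always an optimal pricing distortion (as in the proof of Proposition~\ref{PoIsSe}), and under it $\kappa_2^*$ may be chosen freely in $[0,1]$ wherever $T_2(\p[X>y])=\p[X>y]$. Choosing $\kappa_2^*=1$ there yields an equilibrium of policyholder~2 whose coverage dominates that of \emph{every} equilibrium of policyholder~1. Symmetrically, $g_1^*=T_1$ with $\kappa_1^*=0$ wherever $T_1(\p[X>y])=\p[X>y]$ is dominated by every equilibrium of policyholder~2. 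This set-order comparison is the precise sense in which (1) holds.
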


Theorem \ref{CompStatWRA} states that under Stackelberg equilibria, the optimal insurance coverage and the insurer's expected profit both increase as the policyholder becomes weakly more risk averse in the sense of Proposition \ref{CompT}. That is, the more weakly risk averse the policyholder is, the more coverage they receive, and the more profitable the insurer becomes. The following corollary shows that this result also holds under strong risk aversion. 

\medskip

\begin{corollary} 
\label{CompStatSRA}
Under Stackelberg equilibria, if the policyholder becomes more strongly risk averse, then both the optimal insurance coverage and the insurer’s expected profit increase.
\end{corollary}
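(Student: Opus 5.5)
The plan is to reduce Corollary \ref{CompStatSRA} to Theorem \ref{CompStatWRA}, using the comparative characterizations in Proposition \ref{CompT} and Remark \ref{RemComp}. Consider two policyholders with distortion functions $T_1$ and $T_2$, where the second is more strongly risk averse than the first. By Proposition \ref{CompT}(2), there is an increasing concave function $h:[0,1]\to[0,1]$ with $h(0)=0$, $h(1)=1$ and $T_2 = h\circ T_1$. (I use a letter other than $g$ to avoid a clash with the pricing distortion.)

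The first step is to verify the inequality $h(s)\ge s$ on $[0,1]$, as stated in Remark \ref{RemComp}. For $s\in[0,1]$, concavity gives $h(s) = h\big((1-s)\cdot 0 + s\cdot 1\big) \ge (1-s)h(0) + s\,h(1) = s$. Evaluating at $s=T_1(t)\in[0,1]$ yields $T_2(t) = h(T_1(t)) \ge T_1(t)$ for all $t\in[0,1]$. By Proposition \ref{CompT}(1), this means the second policyholder is also more weakly risk averse than the first.

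The second step is to apply Theorem \ref{CompStatWRA} to the pair $T_1\le T_2$. This gives, for the corresponding Stackelberg equilibria $(I^*_{g_1^*},g_1^*)$ and $(I^*_{g_2^*},g_2^*)$ characterized in Corollary \ref{ThStep2}, the pointwise coverage ordering $I^*_{g_1^*}(x)\le I^*_{g_2^*}(x)$ for all $x\in[0,M]$. It also gives the profit ordering $V^{In}(I^*_{g_1^*},g_1^*)\le V^{In}(I^*_{g_2^*},g_2^*)$, which is exactly the claim.

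There is essentially no obstacle beyond this reduction. The only point needing care is to interpret ``increase'' exactly in the sense of Theorem \ref{CompStatWRA}, as a pointwise ordering of indemnities and an ordering of expected profits. If one wanted more than that, the expected-profit monotonicity could be double-checked directly from \eqref{eq:exp_profit}. Since $\tilde T_2 = 1 - T_2(1-\cdot) \le \tilde T_1$, the integrand $\left(F_X^{-1}\right)^\prime(t)\,(t-\tilde T(t))\One_{\{\tilde T(t)<t\}} = \left(F_X^{-1}\right)^\prime(t)\,(t-\tilde T(t))^+$ is pointwise larger under $T_2$. This confirms the profit statement independently of Theorem \ref{CompStatWRA}.
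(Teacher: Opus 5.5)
Your proposal is correct and matches the paper's proof. The paper proves the corollary in one line by combining Remark \ref{RemComp} with Theorem \ref{CompStatWRA}: a concave $h$ with $h(0)=0$ and $h(1)=1$ satisfies $h(s)\ge s$, so $T_2 = h\circ T_1 \ge T_1$, and the theorem then applies. Your optional direct check of the profit ordering through $\left(t-\tilde T(t)\right)^+$ is the same computation the paper uses in its proof of Theorem \ref{CompStatWRA}.
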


\begin{proof}
The proof of this result follows immediately from Remark \ref{RemComp} and Theorem \ref{CompStatWRA}. 
\end{proof}

In contrast to this clear monotonic relationship with respect to risk aversion obtained in Theorem \ref{CompStatWRA} and Corollary \ref{CompStatSRA}, no such monotonicity can generally be established for the insurance coverage or the insurer's expected profit when the policyholder's risk distribution itself changes.

%============================================
%============================================
%============================================
\bigskip
\section{Pareto Efficiency of Stackelberg Equilibria}
\label{SecPO}
In this section, we examine whether the Stackelberg equilibria characterized in Section \ref{sec:SEcharacterization} lead to Pareto efficient contracts. We first characterize Pareto optimal insurance contracts and then establish their relationship with Stackelberg equilibria. The proofs of these results can be found in Appendix \ref{AppProofChaPo}, Appendix \ref{AppProofSeIsPo}, and Appendix \ref{AppProofPoIsSe}. 

\medskip

\begin{proposition}\label{ChaPo}
An insurance contract $(I^*,\pi^*)$ is Pareto optimal if and only if it solves the following problem:
\begin{equation}\label{ChaPoPro}
\min\limits_{(I,\pi) \in \mathcal{I}_L \times \mathbb{R}} \,\left\{\rho^{Pol}(I,\pi)-V^{In}(I,\pi)\right\}.
\end{equation}
\end{proposition}

\medskip

\begin{proposition}\label{SeIsPo}
Let $(I^*_{g^*}, g^*)$ be a Stackelberg equilibrium characterized in Corollary \ref{ThStep2}. The induced insurance contract $(I^*_{g^*}, \pi^*_{g^*})$ is individually rational and Pareto optimal. Moreover, $\rho^{Pol}(I^*_{g^*},\pi^*_{g^*}) = \rho^{Pol}(0,0)$.
\end{proposition}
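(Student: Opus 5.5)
We prove the three claims in this order: first the indifference identity $\rho^{Pol}(I^*_{g^*},\pi^*_{g^*})=\rho^{Pol}(0,0)$, then individual rationality, and finally Pareto optimality. Throughout we work in quantile form. Write $q^*:=q^*_{g^*}$ for the equilibrium retention quantile, $\pi^*=\Pi_{g^*}(I^*_{g^*}(X))$, and set $u=1-t$. Then $g^*(1-t)=1-\tilde g^*(t)$ and $T(1-t)=1-\tilde T(t)$, so the weight $T(1-t)-g^*(1-t)$ in \eqref{eq:pol_quantile_rep} equals $\tilde g^*(t)-\tilde T(t)$.

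\textbf{Step 1 (indifference).} By \eqref{eq:pol_quantile_rep},
\[
\rho^{Pol}(q^*,g^*)=\int_0^1 (q^*)'(t)\,[\tilde g^*(t)-\tilde T(t)]\,dt+\int_0^1 (F_X^{-1})'(t)\,[1-\tilde g^*(t)]\,dt,
\]
while $\rho^{Pol}(0,0)=\rho^{Pol}(X)=\int_0^1(F_X^{-1})'(t)\,[1-\tilde T(t)]\,dt$. Subtracting gives
\[
\rho^{Pol}(q^*,g^*)-\rho^{Pol}(0,0)=\int_0^1\bigl[(F_X^{-1})'(t)-(q^*)'(t)\bigr]\bigl(\tilde T(t)-\tilde g^*(t)\bigr)\,dt.
\]
We check that the integrand vanishes a.e., using the cases of Theorem \ref{OptDis}.
\begin{itemize}
\item[(i)] On $\{\tilde T<t\}$ we have $\tilde g^*=\tilde T$, so the integrand is $0$.
\item[(ii)] On $\{\tilde T(t)\ge t\}$ with $\tilde g^*<\tilde T$, formula \eqref{qast} gives $(q^*)'=(F_X^{-1})'$ (full retention), so the first factor is $0$.
\item[(iii)] On the remaining set, $\tilde g^*=\tilde T$, so the integrand is again $0$.
\end{itemize}
The only delicate point is that \eqref{qast} lists only the sets $\{\tilde T=t\}$ and $\{\tilde T>t\}$. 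We must confirm that full retention also holds wherever $\tilde T>t$ and $\tilde g^*=\tilde T$. Here $(q^*)'$ could in principle be arbitrary, but the weight $\tilde T-\tilde g^*$ is zero, so the integrand vanishes regardless. Hence the difference is $0$, which proves the indifference identity.

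\textbf{Step 2 (individual rationality).} Step 1 gives the policyholder's side with equality. For the insurer, $V^{In}(0,0)=0$ and \eqref{eq:exp_profit} gives $V^{In}(q^*,g^*)\ge 0$, because its integrand is nonnegative. So the contract is individually rational.

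\textbf{Step 3 (Pareto optimality).} By Proposition \ref{ChaPo}, it suffices to show that $(I^*,\pi^*)$ minimizes $\rho^{Pol}(I,\pi)-V^{In}(I,\pi)$. By translation invariance, $\pi$ cancels:
\[
\rho^{Pol}(I,\pi)-V^{In}(I,\pi)=\rho^{Pol}(R(X))+\E[I(X)].
\]
In quantile form this equals
\[
\int_0^1 q'(t)\,[1-\tilde T(t)]\,dt+\int_0^1\bigl[(F_X^{-1})'(t)-q'(t)\bigr](1-t)\,dt
=\int_0^1 q'(t)\,[t-\tilde T(t)]\,dt+\text{const},
\]
using $\E[Z]=\int_0^1 (F_Z^{-1})'(t)(1-t)\,dt$. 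Over $q\in\mathcal Q_L$ this is minimized pointwise: take $q'=0$ where $\tilde T<t$, take $q'=(F_X^{-1})'$ where $\tilde T>t$, and take $q'$ arbitrary where $\tilde T=t$. This is the same argument as Lemma \ref{le:qg_characterization} with $g$ replaced by the identity.

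It remains to check that $q^*$ from \eqref{qast} has exactly this form. On $\{\tilde T<t\}$, $(q^*)'=0$. On $\{\tilde T>t\}$, $(q^*)'=(F_X^{-1})'$. On $\{\tilde T=t\}$ any value in $[0,(F_X^{-1})']$ is allowed. These cases match, so $q^*$ is a minimizer and, via Lemma \ref{le:optimality_qg_iff_Ig}, $(I^*,\pi^*)$ solves \eqref{ChaPoPro} and is Pareto optimal.

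The main obstacle is bookkeeping rather than analysis. One must track carefully which sets in \eqref{qast} fix $(q^*)'$ and which leave it free, and check that on every set where it is free the corresponding weight vanishes, in both Step 1 and Step 3.
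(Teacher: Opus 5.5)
Your proof is correct and follows essentially the same route as the paper. You show the indifference identity by checking casewise that the integrand $\bigl[(F_X^{-1})'-(q^*)'\bigr](\tilde T-\tilde g^*)$ vanishes, get individual rationality from the nonnegative profit formula \eqref{eq:exp_profit}, and prove Pareto optimality by reducing \eqref{ChaPoPro} to minimizing $\int_0^1 q'(t)[t-\tilde T(t)]\,dt$ over $\mathcal{Q}_L$; you check the match with \eqref{qast} more explicitly than the paper, and invoking Lemma \ref{le:optimality_qg_iff_Ig} with $g$ the identity is a clean way back to indemnities. The ``delicate point'' in Step 1 is moot, because \eqref{qast} already prescribes full retention on all of $\{\tilde T>t\}$, which is exactly what your Step 3 uses.
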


Proposition \ref{SeIsPo} establishes a first welfare theorem for the sequential-move monopolistic market considered in this paper. Specifically, Stackelberg equilibria lead to Pareto optimal insurance contracts. Moreover, in equilibrium, the policyholder is indifferent between participating in the market and not participating, and the monopolist insurer extracts all surplus. In contrast, not every Pareto optimal contract arises from a Stackelberg equilibrium mechanism. The following result discusses the conditions under which Pareto optimal contracts are induced by Stackelberg equilibria. 

\medskip

\begin{proposition}
\label{PoIsSe}
Consider a Pareto optimal contract $(I^*, \pi^*)$ satisfying $\rho^{\text{Pol}}(I^*, \pi^*) = \rho^{\text{Pol}}(0, 0)$. Then there exists a Stackelberg equilibrium $(I^*, g^*_{I^*})$ that induces the Pareto optimal contract $(I^*, \pi^*)$, that is,
$$\Pi_{g^*_{I^*}}\left(I^*(X)\right) = \pi^*.$$
\end{proposition}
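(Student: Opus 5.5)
Given a Pareto optimal contract $(I^*,\pi^*)$ with $\rho^{Pol}(I^*,\pi^*)=\rho^{Pol}(0,0)$, the plan is to construct an explicit pricing distortion $g^*_{I^*}$ of the form \eqref{OptG}, show that $I^*$ is a best response to it via Corollary \ref{ThStep2}, and then identify the premium with $\pi^*$ using the indifference condition.

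\textbf{Step 1: structure of $I^*$.} By Proposition \ref{ChaPo}, $(I^*,\pi^*)$ minimizes $\rho^{Pol}(I,\pi)-V^{In}(I,\pi)=\rho^{Pol}(R(X))+\E[I(X)]$. The premium cancels in this objective, so $I^*$ minimizes $\rho^{Pol}(R(X))+\E[I(X)]$ over $\mathcal I_L$. This is exactly the policyholder's problem \eqref{prob:Pol1} with the identity pricing distortion $g(t)=t$. Lemma \ref{le:qg_characterization}, applied with $g=\mathrm{id}$, then gives the retention quantile $q^*$ of $I^*$:
\begin{itemize}
\item $(q^*)'=0$ where $T(1-t)>1-t$;
\item $(q^*)'=(F_X^{-1})'$ where $T(1-t)<1-t$;
\item $(q^*)'$ is arbitrary in $[0,(F_X^{-1})']$ where $T(1-t)=1-t$.
\end{itemize}

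\textbf{Step 2: choice of $g^*_{I^*}$.} Set
\[
g^*_{I^*}(s)=\begin{cases} T(s), & T(s)>s,\\ s, & T(s)\le s.\end{cases}
\]
That is, $g^*_{I^*}=\max(T,\mathrm{id})$. This function is non-decreasing and satisfies $g(0)=0$, $g(1)=1$.

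One caveat is the differentiability required in the definition of a distortion function. A maximum of two differentiable functions need not be differentiable, and I would remark that this is tolerated here exactly as in Corollary \ref{ThStep2}, whose admissible $g^*$ are of the same form.

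This $g$ lies in the class \eqref{OptG}: on $\{T\le \mathrm{id}\}$ it takes the value $s\in[T(s),\cdot]$, and the upper bound in the interval, given by the infimum of later values, is respected because $g$ is monotone.

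\textbf{Step 3: $I^*$ is a best response.} Under $g^*_{I^*}$, the sets where $g<T$, $g>T$ and $g=T$ must match the cases of $\kappa^*$ in Corollary \ref{ThStep2}.
\begin{itemize}
\item Where $T(s)>s$ we have $g=T$. Corollary \ref{ThStep2} prescribes full coverage there, and this agrees with Step 1. The check is that Lemma \ref{le:qg_characterization} permits any slope on $\{g=T\}$, which includes slope $0$ for $q$.
\item Where $T(s)<s$ we have $g=\mathrm{id}>T$, which forces no coverage, again agreeing with Step 1.
\item Where $T(s)=s$ we have $g=T=\mathrm{id}$, and arbitrary slopes are allowed. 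This matches the freedom in $I^*$ from Step 1.
\end{itemize}
Hence $I^*$ has the form required by Corollary \ref{ThStep2}, so $(I^*,g^*_{I^*})$ is a Stackelberg equilibrium.

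\textbf{Step 4: the premium equals $\pi^*$.} By Proposition \ref{SeIsPo}, the contract $(I^*,\Pi_{g^*_{I^*}}(I^*(X)))$ satisfies $\rho^{Pol}=\rho^{Pol}(0,0)$. By hypothesis, $\rho^{Pol}(I^*,\pi^*)=\rho^{Pol}(0,0)$ as well. Translation invariance gives
\[
\rho^{Pol}(I^*,\pi)=\rho^{Pol}(R^*(X))+\pi,
\]
which is strictly increasing in $\pi$. Therefore $\Pi_{g^*_{I^*}}(I^*(X))=\pi^*$.

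\textbf{Main obstacle.} The delicate point is Step 3. On the region $\{T>\mathrm{id}\}$, the set $\{g=T\}$ admits arbitrary slopes, so I must confirm that the full-coverage slope of $I^*$ is among the allowed ones. I also need Corollary \ref{ThStep2}'s characterization of equilibria to accept exactly this $g$, including at boundary points where $T(s)=s$; these form a null set under the a.e.\ qualifications, so they should not matter.
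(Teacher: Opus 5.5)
Your argument is correct, but it differs from the paper's in two places: the choice of pricing distortion, and how the premium is matched to $\pi^*$.

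\textbf{The paper's route.} The paper simply takes $g^*_{I^*}=T$. The indifference hypothesis, together with translation invariance and comonotonic additivity, gives in one line
$\pi^*=\rho^{Pol}(X)-\rho^{Pol}(X-I^*(X))=\rho^{Pol}(I^*(X))=\Pi_T(I^*(X))$.
Moreover, $g=T$ trivially lies in \eqref{OptG}, as the lower end of the admissible interval. It also satisfies Theorem \ref{OptDis}, with $(q^*)'$ matching \eqref{qast} because $\{\tilde g^*<\tilde T=t\}$ is empty.

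\textbf{Your route.} You take $g=\max(T,\mathrm{id})$ and recover the premium from Proposition \ref{SeIsPo} plus strict monotonicity of $\pi\mapsto\rho^{Pol}(R^*(X))+\pi$. This is valid. It can also be seen directly: $I^*$ has zero slope wherever $\max(T,\mathrm{id})\neq T$, so $\Pi_{\max(T,\mathrm{id})}(I^*(X))=\Pi_T(I^*(X))=\rho^{Pol}(I^*(X))$. Still, your argument leans on an earlier proposition where the paper's is self-contained.

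\textbf{What each choice buys.} $T$ is differentiable by assumption, so the paper never meets your caveat about kinks of $\max(T,\mathrm{id})$ where $T$ crosses the identity. On the other hand, under $g=T$ the policyholder's objective is constant in $q$. The paper's equilibrium therefore rests entirely on the leader-favourable selection built into Definition \ref{BO}. Your choice instead makes zero coverage strictly optimal on $\{T<\mathrm{id}\}$ and leaves ties only where $T\ge\mathrm{id}$. Your caveat about differentiability is fair: none of the paper's arguments about $g$ use it, and the paper itself works with the discontinuous VaR distortion. Even so, the cleanest fix is just to take $g=T$.

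\textbf{A small correction.} In your closing remark, $\{T=\mathrm{id}\}$ need not be null; for example, $T$ may coincide with the identity on an interval. This is harmless, since your third bullet in Step 3 already handles that set.
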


Proposition \ref{PoIsSe} states that the Pareto optimal contracts that leave the policyholder indifferent between participating and not participating in the market are induced by Stackelberg equilibrium mechanisms. 

\medskip

The results of this section are consistent with findings of the existing literature, e.g., \cite{BOONEN2023382}, \cite{GhossoubZhu2024}, in which the monopolistic market is modeled as a sequential-move game. Additionally, \cite{ANDRAOS2026103210} consider a generalized framework and obtain similar results for the special case of monopoly. Specifically, Stackelberg equilibria yield Pareto optimality without inducing any welfare gain to the policyholder. Conversely, only the Pareto optimal contracts that leave the policyholder indifferent between suffering loss and entering the market are induced from Stackelberg equilibrium mechanisms.

%============================================
%============================================
%============================================
\bigskip
\section{Examples}
\label{SecExa}
In this section, we assume that the policyholder evaluates risk using a specific class of risk measures, and we examine the resulting Stackelberg equilibria. As established in Lemma~\ref{le:qg_characterization} and Theorem~\ref{OptDis}, the solutions to both stages of the Stackelberg game are generally not unique. Nevertheless, we show that under certain assumptions, Stackelberg equilibria may take one of several familiar forms such as full insurance, a coverage limit insurance contract, or a deductible insurance contract.

\bigskip

\subsection{Optimality of Full Insurance}
\label{concave}
Suppose that the policyholder is weakly risk averse, that is, the distortion function satisfies 
$$
T(t) \geq t, \ \forall t \in [0,1].
$$

\noindent It follows from Corollary~\ref{ThStep2}, that the market mechanism $(I^*_{g^*}, g^*)$ described below is a Stackelberg equilibrium:
$$
g^*(t) = T(t), \ \forall \, t \in [0,1],
\ \ \hbox{and} \ \ 
I^*_{g^*}(x) = x, \ \forall \, x \in [0,M].
$$

\noindent That is, the insurer offers full coverage, and the pricing distortion function coincides with the distortion function that reflects the policyholder's risk aversion. This result also holds for a strongly risk averse policyholder, by Remark \ref{RemComp}.

This example covers many important risk preference models commonly used in practice, such as Tail Value-at-Risk (TVaR), which corresponds to a concave distortion function of the form 
$$
T(t) := \min\left(1, \frac{t}{1-\alpha}\right), \ \text{for a given $\alpha \in (0,1)$}.
$$

\noindent Moreover, under TVaR, the insurer's expected profit satisfies:
$$
V^{{In}}(I^*_{g^*},\, g^*)
= \int_0^1 \big(F_X^{-1}\big)^\prime(t)\,\big[t - \tilde{T}(t)\big]\, \One_{\{\tilde{T}(t) < t\}}\, \mathrm{d}t \geq 0,
$$

\noindent since the set $\{t\in[0,1]: \tilde T(t)< t\}$ has positive measure.

%============================================
\bigskip
\subsection{Optimality of Coverage Limit Contracts}
\label{ValueAtRisk}
In this example, we assume that the policyholder evaluates risk using Value-at-Risk (VaR) at confidence level $\alpha \in (0,1)$. This corresponds to a distortion risk measure with distortion function 
$$
T(t) := \One_{(1-\alpha,\,1]}(t), \ \forall t \in [0,1]. 
$$

It follows from Corollary \ref{ThStep2} that a Stackelberg equilibrium $(I^*_{g^*}, g^*)$ is characterized as follows. The optimal pricing distortions $g^*$ satisfies 
$$
g^*(t) =
\begin{cases}
1, & t > 1-\alpha,  \quad  \ \\[6pt]
\in \Big[\,0,\, \inf\big\{ z > t : g^*(z) \big\} \Big], & t \leq 1-\alpha, \quad 
\end{cases} \quad  \text{for a.e $t \in [0,1]$, }
$$

\noindent and the optimal indemnity function $I^*_{g^*}$ is given by
$$
I^*_{g^*}(x) =
\begin{cases}
x, & x < F_X^{-1}(\alpha), \quad \\[6pt]
F_X^{-1}(\alpha), & x \geq F_X^{-1}(\alpha), \quad 
\end{cases} \quad \forall x \in [0,M]. 
$$

In this case, the policyholder receives full coverage for losses below the VaR threshold $F_X^{-1}(\alpha)$. Moreover, coverage is capped at this threshold level, so that for losses exceeding $F_X^{-1}(\alpha)$, the policyholder retains the excess loss. The insurer's expected profit satisfies:
$$
V^{{In}}(I^*_{g^*},\, g^*)
= \int_0^1 \big(F_X^{-1}\big)^\prime(t)\,\big[t - \tilde{T}(t)\big]\, \One_{\{\tilde{T}(t) < t\}}\, \mathrm{d}t
= \int_0^\alpha t \, \cdot \big(F_X^{-1}\big)^\prime(t) \, \mathrm{d}t, 
$$

\noindent since $\tilde T(t)=\One_{[\alpha,1]}(t)$, for all $t \in [0,1]$. Additionally, note that since $F_X^{-1}$ is strictly increasing, it follows that for a fixed $x \in [0,M]$, the optimal indemnity function $I^*_{g^*}$ weakly increases with the VaR confidence level $\alpha$. Moreover, the insurer's expected profit increases with $\alpha$. 

\medskip

Hence, if the policyholder is a VaR minimizer, then the coverage limit contract is a Stackelberg equilibrium and depends on the chosen VaR confidence level $\alpha$. In addition, a more risk-averse policyholder (with a higher value for $\alpha$) receives greater coverage and generates higher expected profit for the insurer, consistently with Theorem \ref{CompStatWRA} and Corollary \ref{CompStatSRA}.

\medskip

The results of our examples so far, align with the findings of \citet{cheung2019risk}, where the policyholder is assumed to be either strongly risk averse or a VaR minimizer. Specifically, if the policyholder's preferences are represented by a DRM with a concave distortion function, the results of Example \ref{concave} align with  \citet[Theorem 3.1, case $\gamma=0$]{cheung2019risk}. On the other hand, if the distortion function corresponds to a VaR risk measure, then the results of Example \ref{ValueAtRisk} are consistent with \citet[Theorem 3.5, case $\gamma=0$]{cheung2019risk}.

%============================================
\bigskip
\subsection{Optimality of Deductible Contracts}
We now assume that the policyholder uses an inverse-S-shaped distortion function (ISSD), which is commonly used to study decision making under uncertainty (e.g., \cite{Tversky1992}). 

\medskip

\begin{definition}
A distortion function $T$ is said to be an inverse-S-shaped distortion function (ISSD) if it is twice-differentiable on $(0,1)$, and there exists  $t_0 \in (0,1)$ such that:
\begin{enumerate}
\item  $T^   {\prime} (t)$ is strictly deceasing on $(0,t_0)$, and
\medskip
\item  $T^ \prime(t)$ strictly increasing on $(t_0,1).$ 
\end{enumerate}

\medskip

\noindent Moreover, $\lim_{t\downarrow 0}T^\prime(t)>1,$ and $\lim_{t\uparrow 1}T^\prime(t)>1.$
\end{definition}

\medskip

In this case, for $Y \in B^+ (\cF)$, the expression of the distortion risk measure is given by
\begin{align*}
\rho(Y) 
= \int_0^{+\infty}T(\mathbb{P}(Y\geq y))\,\mathrm{d}y
=\int_0^{+\infty}\,y\,T^\prime(1-F_Y(y))\,\mathrm{d}F_Y(y).
\end{align*}

Assume that there exists $t_1\in(0,1)$ such that $T(t_1)=t_1$. That is, $t_1$ is the intersection point between the identity function and the policyholder's distortion function. We know from Corollary~\ref{ThStep2}, that the Stackelberg equilibrium $(I^*_{g^*}, g^*)$ is characterized as follows. The optimal pricing distortion $g^*$ satisfies:
$$
g^*(t) =
\begin{cases}
T(t), & t < t_1, \quad \\[6pt]
\in \big[\, T(t),\, \inf\{ z > t : g^*(z) \} \big], & t \ge t_1,\quad 
\end{cases} \quad  \text{for a.e. $t \in [0,1]$,}
$$

\noindent and the optimal indemnity function $I^*_{g^*}$ satisfies:
$$
I^*_{g^*}(x) =
\begin{cases}
0, & x \le F_X^{-1}(1 - t_1), \quad \\[6pt]
x - F_X^{-1}(1 - t_1), & x > F_X^{-1}(1 - t_1), \quad
\end{cases} \quad \forall x \in [0,M]. 
$$

In this case, the optimal indemnity fully covers losses above a fixed deductible $ F_X^{-1}(1 - t_1)$, which is fully determined by $t_1$. Hence, if the policyholder is more concerned about extreme losses, then the deductible contract is a Stackelberg equilibrium. Moreover, the insurer's expected profit is given by:
\begin{align*}
V^{In}\left( I^*_{{g}^*},{g}^*\right)
&=\int_0^{t_1}\left(F_X^{-1}\right)^\prime(1-t)\left(T(t)-t\right)\,\mathrm{d}t.
\end{align*}

We note that for a fixed $x \in [0,M]$, as $t_1$ increases, $ F_X^{-1}(1 - t_1)$ decreases, implying a greater insurance coverage. That is, the optimal indemnity $I^*_{g^*}$ increases with $t_1$. However, from the expression of the expected profit at equilibrium, we cannot conclude whether $V^{In}\left( I^*_{{g}^*},{g}^*\right)$ increases with $t_1$. This is because a higher value of $t_1$ implies a change in the distortion function $T$ on the interval $[0,t_1)$. Hence, to evaluate the impact of $t_1$ on the insurer's expected profit, we consider the following concrete numerical example, in which the distortion function takes the form given in \cite{Tversky1992}:
\begin{equation*}
T_\theta(t)
:=\frac{t^\theta}{\left(t^\theta+(1-t)^\theta\right)^{\frac{1}{\theta}}},
\end{equation*}

\noindent where $\theta \in [0.3,0.8]$. Additionally, we consider three cases for the distribution of the random loss $X$. 

\medskip

We first assume that the random loss $X$ follows a uniform distribution with cumulative distribution function (CDF) given by 
$$
F_X(x)=\frac{x}{M}, \ \forall x \in [0,M] \ \text{and $M = 10$}. 
$$

\medskip

Figure \ref{F1-a} plots the distortion functions $T_\theta$ for different values of $\theta$. The figure shows that as $\theta$ increases, $T_\theta$ approaches the identity function, and the intersection point $t_1$ shifts to larger values. Moreover, functions with smaller values of $\theta$ are more concave near $0$ and more convex near $1$, implying that the policyholder places greater emphasis on extreme losses. 

\medskip

Figure \ref{F1-b} depicts the insurer's expected profit under the Stackelberg optimal contract as a function of $t_1$. The result shows that the insurer's expected profit does not vary monotonically with $\theta$ or the intersection point $t_1$. Notably, it reaches a maximum around $t_1 \approx 0.32$.

\medskip

\begin{figure}[H] 
\begin{subfigure}{0.48\textwidth}
\centering
\includegraphics[height=8cm]{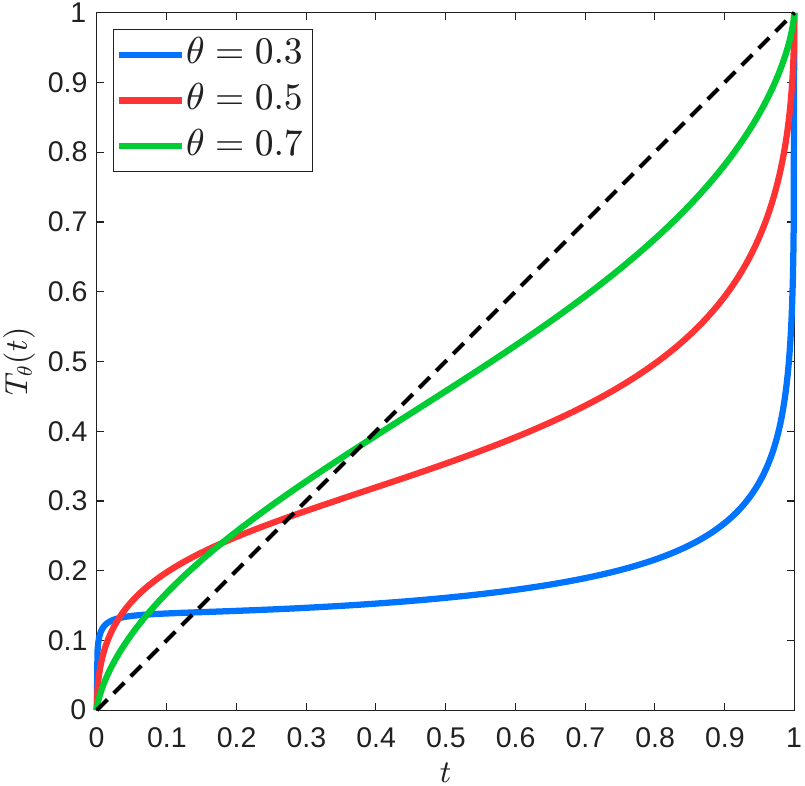}
\caption{Policyholder’s Distortion Function.}
\label{F1-a}
\end{subfigure}\hfill
\begin{subfigure}{0.48\textwidth}
\centering
\includegraphics[height=8cm]{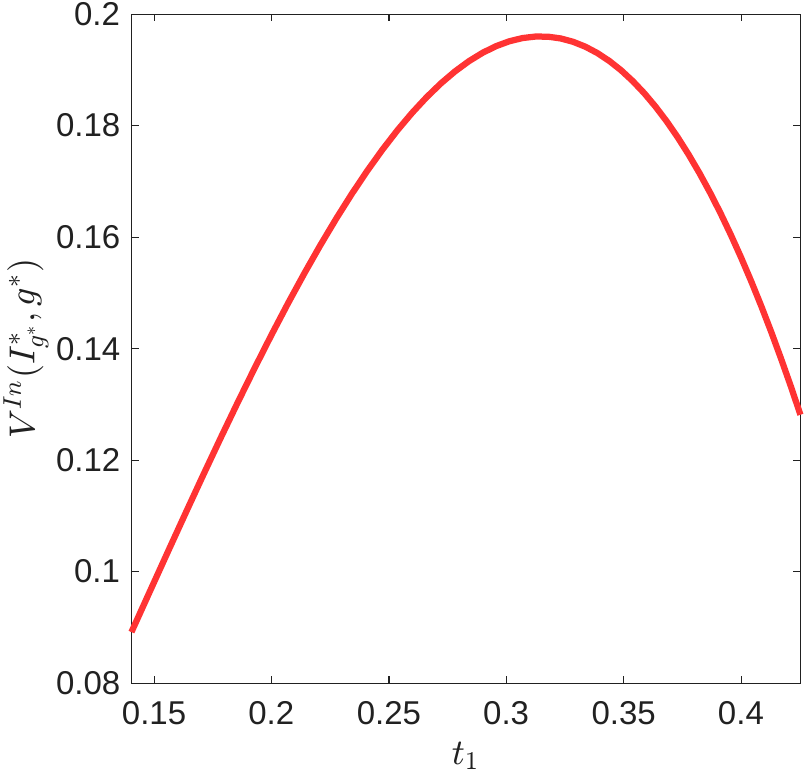}
\caption{Insurer’s Expected Profit Under SE.}
\label{F1-b}
\end{subfigure}
\caption{The case where $X$ follows a uniform distribution.}
\label{Fig:bowley}
\end{figure}

\medskip

Alternatively, assume now that the loss random variable $X$ follows a truncated exponential distribution, with CDF given by
$$
F_X(x) = \frac{1 - \exp(-\lambda x)}{1 - \exp(-\lambda M)}, \ \forall x \in [0, M], \ M=10, \ \text{and $\lambda >0$}. 
$$

\medskip

Figure \ref{F2-a} plots the CDF for different values of $\lambda$. It shows that as $\lambda$ increases, the CDF becomes steeper for small losses, implying that losses are more likely to take smaller values. Figure \ref{F2-b} plots the insurer's profit as a function of $t_1$ for different values of $\lambda$. The insurer's profit does not vary monotonically with the parameter $\theta$ or the intersection point $t_1$.

\medskip

The result also depends on the loss distribution parameter $\lambda$.  In particular, the insurer's profit reaches its maximum around $t_1 \approx 0.3$ when $\lambda = 0.1$. When $\lambda$ increases to $0.5$, meaning that the distribution becomes more concentrated on the left, the maximum shifts to about $t_1 \approx 0.22$, indicating that if the policyholder faces a lower probability of large losses, greater concern for extreme losses becomes more valuable to the insurer. When $\lambda$ increases further to $1$, implying that losses are even more likely to be small, the maximum shifts slightly further left, and the insurer's profit then decreases monotonically with $t_1$. This suggests that the insurer achieves the highest profit when the policyholder places more weight on extreme loss events.

\medskip

\begin{figure}[H]
\centering
\begin{subfigure}{0.48\textwidth}
\centering
\includegraphics[height=8cm]{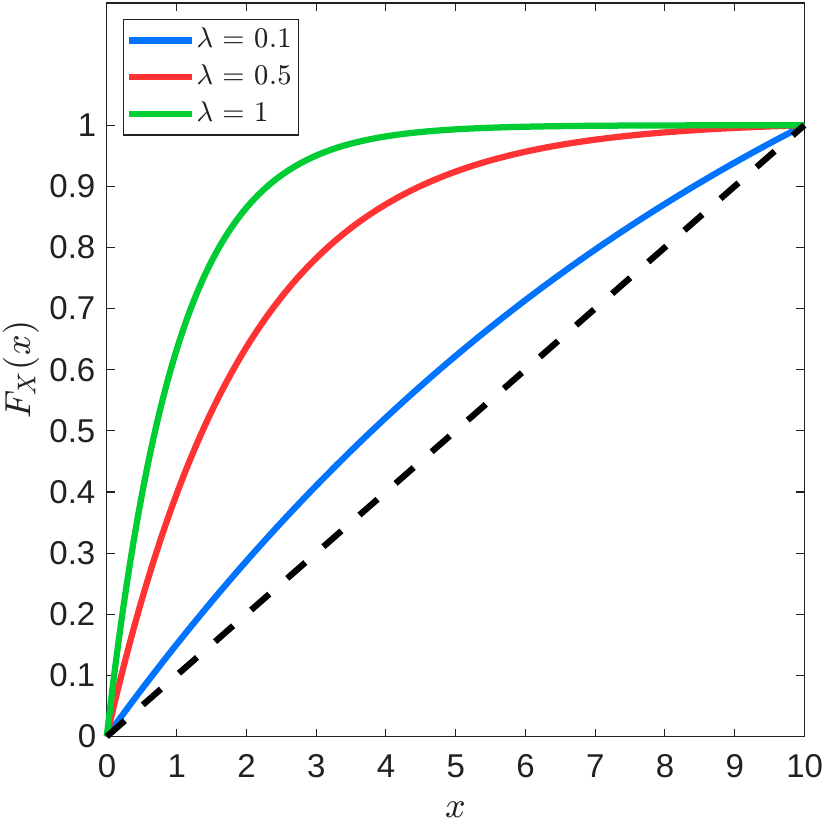}
\caption{Cumulative Distribution Function.}
\label{F2-a}
\end{subfigure}\hfill
\begin{subfigure}{0.48\textwidth}
\centering
\includegraphics[height=8cm]{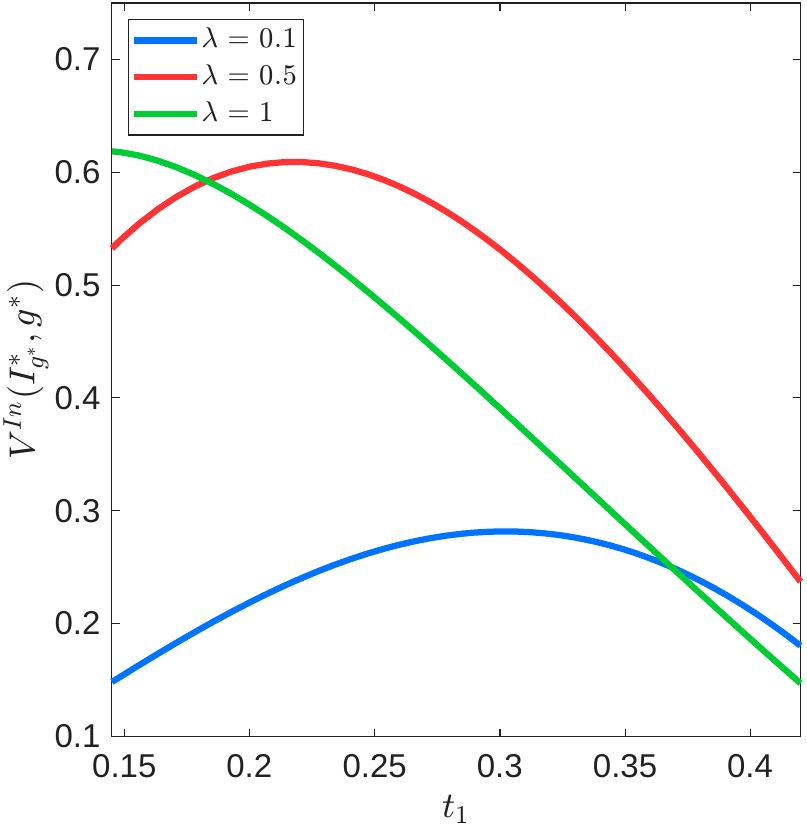} 
\caption{Insurer’s Expected Profit Under SE.}
\label{F2-b}
\end{subfigure}
\caption{The case where $X$ follows a truncated exponential distribution.}
\label{Fig:bowley2}
\end{figure}

\medskip

Finally, assume that the random loss $X$ follows a Kumaraswamy distribution with CDF given by
$$
F_X(x) = 1 - \big(1 - (\tfrac{x}{M})^a\big)^b, \ \forall x \in [0, M], \ M=10, \ \text{and $a,b>0$}.
$$

\medskip

Figure \ref{F3-a} plots the CDFs for different combinations of $a$ and $b$. The blue curve lies close to, but slightly below, the identity function, indicating that more losses are concentrated on the right compared to the uniform distribution. The red curve is more concave than the blue one, implying that larger losses are more likely. The green curve represents the most extreme right-skewed case among the three. 

\medskip

Figure \ref{F3-b} shows the corresponding insurer's profit under these three scenarios. When $a = 1.5$ and $b = 1$, the insurer's profit does not vary monotonically with $t_1$, and it reaches its maximum around $t_1 \approx 0.32$. When $a = 1.5$ and $b = 0.5$, meaning the loss distribution becomes riskier in the sense of first-order stochastic dominance, the insurer's profit decreases for all $t_1$. This suggests that when the underlying risk increases, the insurer's profit declines regardless of the policyholder's risk attitude. Moreover, the maximum shifts to the right, reaching about $t_1 \approx 0.37$. When the loss becomes even riskier with $a = 2$ and $b = 0.3$, a similar decreasing pattern is observed, with the maximum shifting further to the right to approximately  $t_1 \approx 0.4$.

\begin{figure}[H]
\centering
\begin{subfigure}{0.48\textwidth}
\centering
\includegraphics[height=8cm]{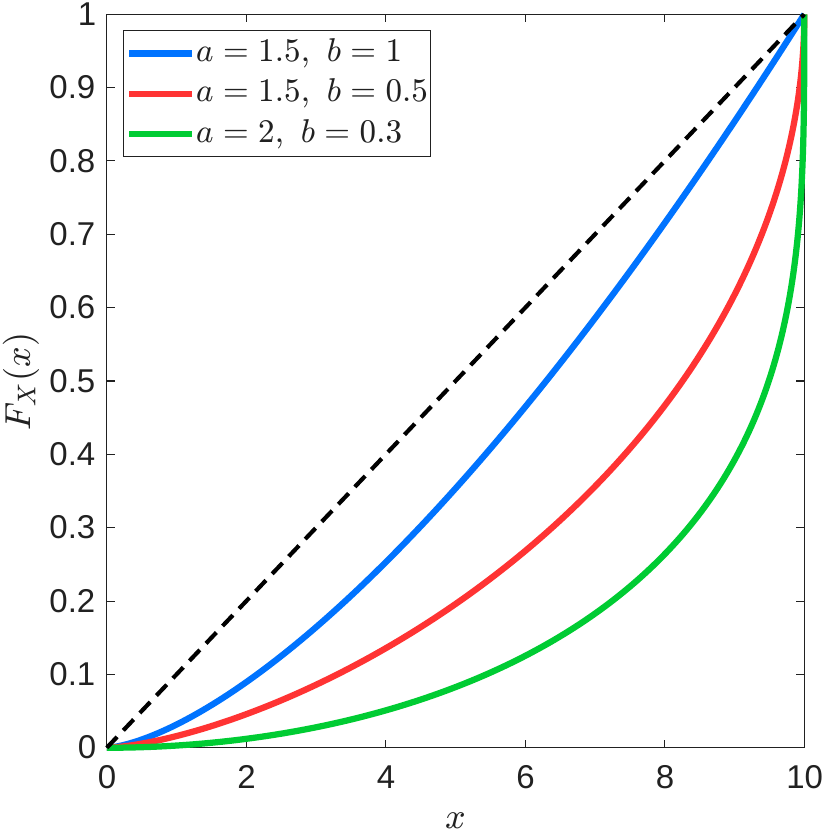}
\caption{Cumulative Distribution Function.}
\label{F3-a}
\end{subfigure}\hfill
\begin{subfigure}{0.48\textwidth}
\includegraphics[height=8cm]{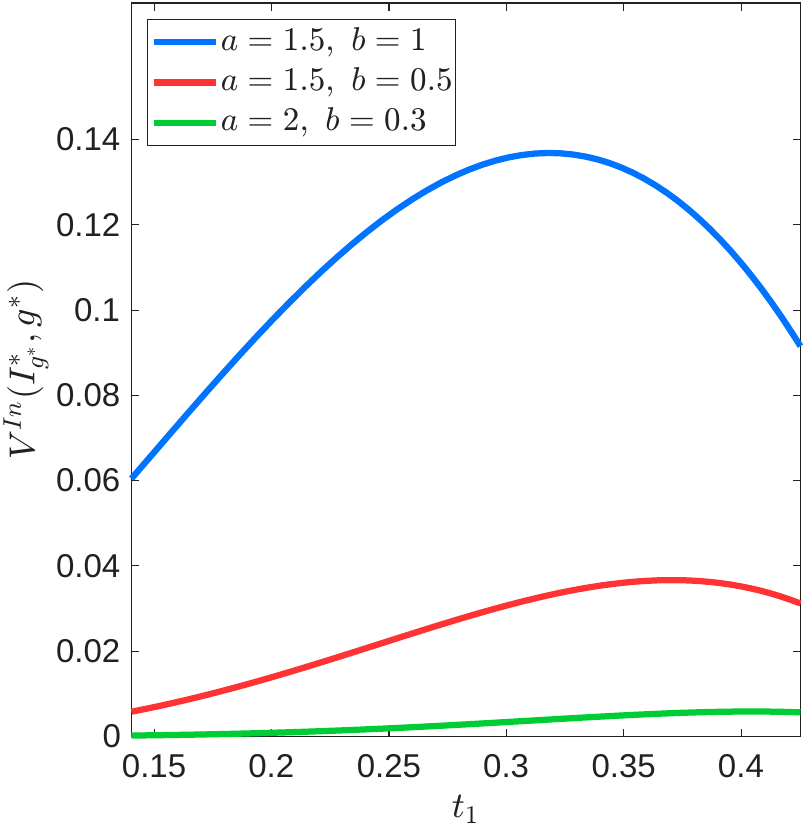}
\caption{Insurer's Expected Profit Under SE.}
\label{F3-b}
\end{subfigure}
\caption{The case where $X$ follows a Kumaraswamy distribution.}
\label{Fig:bowley3}
\end{figure}

%============================================
%============================================
%============================================
\bigskip
\section{Conclusion}
\label{SecConDis}

In this paper, we study Stackelberg Equilibria (Bowley optima) in a monopolistic centralized sequential-move insurance market. We consider a risk-neutral, profit-maximizing insurer who sets premia using a distortion premium principle, and a single policyholder who seeks to minimize a distortion risk measure. 

\medskip

We characterize Stackelberg equilibria explicitly, and we show that, in equilibrium, the optimal indemnity function exhibits a layer-type structure, providing full insurance over any loss layer on which the policyholder is more pessimistic than  the insurer's pricing functional about tail losses. Equilibrium contracts provide no coverage over loss layers on which  the policyholder is less pessimistic than the insurer's pricing functional about tail losses. When the policyholder and the insurer's pricing functional are equally pessimistic about tail losses, the marginal indemnity may take an arbitrary shape, within the global feasibility constraints.
Additionally, in equilibrium, the optimal pricing distortion function is determined by the policyholder's degree of (weak) risk aversion, that is, whether the policyholder's distortion function is above or below the identity function, such that prices never exceed the policyholder's marginal willingness to insure tail losses. 

\medskip

When the policyholder is either weakly risk averse (with a distortion function that lies above the identity function) or strongly risk averse (with a concave distortion function), we show that the optimal contract provides full insurance, and the optimal pricing distortion function coincides with that of the policyholder. For policyholders who evaluate risk using a VaR risk measure at a confidence level $\alpha \in (0,1)$, the optimal coverage includes an upper limit, by providing full insurance for losses below the $\alpha$-quantile while leaving the upper tail uninsured. For a policyholder who displays strong sensitivity to extreme losses, captured by an inverse-S-shaped distortion function, the optimal indemnity function takes the form of a deductible contract, whereby extreme losses are fully transferred to the insurer. Furthermore, we show that in a Stackelberg equilibrium, both the insurance coverage and the insurer's expected profit increase with the policyholder's degree of strong or weak risk aversion. A more risk-averse policyholder will receive greater insurance coverage under an equilibrium optimal contract, and is more valuable to the insurer. 

\medskip

Moreover, we examine the Pareto efficiency of equilibrium contracts. Echoing recent work in the literature, we show that equilibrium contracts are Pareto optimal, but they do not induce a welfare gain to the policyholder, which is unsurprising in a monopoly. Conversely, any Pareto-optimal contract that leaves no welfare gain to the policyholder can be obtained as a Stackelberg equilibrium contract.

%=====================================================================
%=====================================================================
%=====================================================================
\bigskip
\medskip
\setlength{\parskip}{0.5ex}
\hypertarget{LinkToAppendix}{\ }
\appendix

\vspace{-0.4cm}

\section{Proofs of Main Results}\label{appendix-proofs}

%=====================================================================
\subsection{Proof of Lemma \ref{le:optimality_qg_iff_Ig}}\label{App:le:optimality_qg_iff_Ig}
For a given pricing distortion function $g$, assume that $q_g$ is optimal for \eqref{DisMin}, and we aim to show that $I_g(x)= x - q_g \left(F_X(x)\right)$ is optimal for \eqref{prob:Pol1}. Consider any feasible solution $I$ to \eqref{prob:Pol1}, and let $q(t)$ be the quantile function of the associated retention $R(X)$, for all $t \in [0,1]$. We saw in \eqref{eq:pol_quantile_rep} that
\begin{align*}
\rho^{Pol}(I,g)
&=\rho^{Pol} (R(X)) + \Pi_g (I(X)) \\
&=\int_0^1 q^\prime(t) \, \left[T(1-t) - g(1-t)\right]\,\mathrm{d}t + \int_0^1 \left(F_{X}^{-1}\right)^\prime(t) \, g(1-t)\,\mathrm{d}t.
\end{align*}

\noindent Since $q_g$ is optimal for the problem given in \eqref{DisMin}, then the following inequality holds:
\begin{align*}
\rho^{Pol}(I,g)
&\geq \int_0^1 q_g^\prime(t) \, \left[T(1-t) - g(1-t)\right]\,\mathrm{d}t + \int_0^1 \left(F_{X}^{-1}\right)^\prime(t) \, g(1-t)\,\mathrm{d}t,
\end{align*}

\noindent where, by Remark \ref{re:quantilecoadd},  
$$q_g(t) = F_X^{-1}(t) - F^{-1}_{I_g(X)}(t), \ \forall t \in [0,1].$$ 

\noindent Hence, 
\begin{align*}
\rho^{Pol}(I,g)
&\geq \int_0^1 \big[ F_X^{-1}(t)-F^{-1}_{I_g(X)}(t) \big]^\prime 
\left[T(1-t) - g(1-t)\right]\,\mathrm{d}t + \int_0^1 \left(F_{X}^{-1}\right)^\prime(t) \, g(1-t)\,\mathrm{d}t \\
&=\int_0^1\left(F_X^{-1}\right)^\prime(t)\,T(1-t)\,\mathrm{d}t-\int_0^1\left(F^{-1}_{I_g(X)}\right)^\prime(t)\,T(1-t) \, \mathrm{d}t + \pi_g\left(I_g(X)\right)\\
&=\int \left(X-I_g(X)+\Pi_g\left(I_g(X)\right)\right)
\mathrm{d}T\circ\mathbb{P}
=\rho^{Pol}\left(I_g,g\right),
\end{align*}

\noindent which implies that the indemnity function $I_g(x) = x - q_g \big( F_X (x) \big)$ is optimal for the policyholder's problem \eqref{prob:Pol1}.

\medskip

Conversely, we assume that the indemnity function $I_g$, given by $I_g(x) = x - q_g \big( F_X (x) \big)$ is optimal for the policyholder's problem \eqref{prob:Pol1}, and we aim to show that the quantile $q_g$ is optimal for \eqref{DisMin}. Consider any feasible solution $q$ to \eqref{DisMin} and let $I(x) = x - q(F_X(x))$. We have that:
\begin{align*}
\rho^{Pol}\left(q,g\right)
&=\int_0^1 q^\prime(t) \, \left[T(1-t) - g(1-t)\right]\,\mathrm{d}t + \int_0^1 \left(F_{X}^{-1}\right)^\prime(t) \, g(1-t)\,\mathrm{d}t \\
&=\int\left(X- I(X) + \Pi_g\left(I(X)\right)\right) \,
\mathrm{d}T\circ\mathbb{P}. 
\end{align*}

\noindent Since $I_g$ is optimal for \eqref{prob:Pol1}, then the following inequality holds:
\begin{align*}
\rho^{Pol}\left(q,g\right)
&\geq \int\left(X-I_g(X)+\Pi_g\left(I_g(X)\right)\right) \,
\mathrm{d}T\circ\mathbb{P}\\
&= \int_0^1 q_g^\prime(t) \, \left[T(1-t) - g(1-t)\right]\,\mathrm{d}t + \int_0^1 \left(F_{X}^{-1}\right)^\prime(t) \, g(1-t)\,\mathrm{d}t
=\rho^{Pol}\left( q_g,g\right),
\end{align*}

\noindent which implies the optimality of $q_g$ for \eqref{DisMin}. \qed

%=====================================================================
\bigskip
\subsection{Proof of Lemma \ref{le:qg_characterization}} \label{App:le:qg_characterization}
First, note that
\begin{align*}
\rho^{Pol}(q,g)
&=\Pi_g(q)+\int_0^1q^\prime(t)\,T(1-t)\mathrm{d}t\\
&=\int_0^1\left(\left(F_X^{-1}\right)^\prime(t)-q^\prime(t)\right)\,g(1-t)\,\mathrm{d}t+\int_0^1q^\prime(t)\,T(1-t)\mathrm{d}t\\
&=\int_0^1\left(F_X^{-1}\right)^\prime(t)\,g(1-t)\,\mathrm{d}t+\int_0^1q^\prime(t)\,\left(T(1-t)-g(1-t)\right)\mathrm{d}t.
\end{align*}

\noindent The first term in the final line is independent of $q$, so minimizing the risk measure $\rho^{Pol}(q,g)$ over $q\in\mathcal{Q}_L$  is equivalent to solving:
$$
\min\limits_{q\in \mathcal{Q}_L}\int_0^1 q^\prime(t)\left(T(1-t)-g(1-t)\right)\,\mathrm{d}t.
$$

\noindent By the Marginal Indemnity Function Approach given in \cite{ASSA201570}, the optimal quantile function $q_g$ must satisfy:
\begin{equation}\label{quantile}
\left(q_g\right)^\prime(t)=
\left\{
\begin{array}[c]{ll}%
0, & g(1-t)<T(1-t),\vspace{0.2cm}\\
\in\left[0,\left(F_X^{-1}\right)^\prime(t)\right], & g(1-t)=T(1-t),\vspace{0.2cm}\\
\left(F_X^{-1}\right)^\prime(t), & g(1-t)>T(1-t).
\end{array}
\right.  
\end{equation} \qed

%=====================================================================
\bigskip
\subsection{Proof of Theorem \ref{OptDis}}
\label{AppProofThOptDis}
First, note that the insurer's profit can be expressed as:
\begin{equation}
\begin{split}
\label{INe}
V^{In}\left( q_{{g}},g\right)
&=\int_0^1\left[\left(F_X^{-1}\right)^\prime(t)-q_{{g}}^\prime(t)\right]\,g(1-t)\,\mathrm{d}t-\int_{0}^1 \left[F_X^{-1}(t)-q_{{g}}(t)\right]\,\mathrm{d}t \\
&=\int_0^1\left[\left(F_X^{-1}\right)^\prime(t)-q_{{g}}^\prime(t)\right]\,g(1-t)\,\mathrm{d}t-\int_{0}^1 \left[\left(F_X^{-1}\right)^\prime(t)-q_{{g}}^\prime(t)\right](1-t)\,\mathrm{d}t\\
&= \int_0^1\left[\left(F_X^{-1}\right)^\prime(t)-q_{{g}}^\prime(t)\right] h(t) \ dt,
\end{split}
\end{equation}

\noindent where $h(t):=t-1+g(1-t)$, for all $t \in [0,1]$. Consider now the following three sets:
$$
\cA_1 =\left\{t \in [0,1]:  h(t)< t-\tilde T(t)\right\},
\ 
\cA_2 =\left\{t\in [0,1]: h(t)=t- \tilde T(t)\right\}, 
 \ \hbox{and} \ 
\cA_3 = \left\{t\in [0,1]: h(t)> t-\tilde T(t)\right\}.
$$

\noindent where $\tilde T$ is the conjugate of the distortion function $T$. Then it follows from Lemma \ref{le:qg_characterization} that
\begin{equation}\label{h}
\big( q_h \big)^\prime(t)=
\left\{
\begin{array}[c]{ll}
0, &t\in\cA_1,\\
\phi_{h}(t)\in\left[0,\left(F_X^{-1}\right)^\prime(t)\right], & t\in\cA_2,\\
\left(F_X^{-1}\right)^\prime(t), &t\in\cA_3.
\end{array}
\right.  
\end{equation}

\noindent The insurer's profit in \eqref{INe} can then be rewritten as follows:
\begin{align}
\label{InFiner}
V^{In}\left( q_{{h}},h\right)
&= \int_{ 0}^1\left(F_X^{-1}\right)^\prime(t)\,h(t)\,\One_{\cA_1}(t)\,\mathrm{d}t+\int_{0}^1\left[\left(F_X^{-1}\right)^\prime(t)-\phi_{h}(t)\right]\,h(t)\,\One_{\cA_2}(t)\,\mathrm{d}t.
\end{align}

\medskip

Note that on $\cA_2$, the optimal retention function allows for some flexibility as long as $q_h\in\mathcal{Q}_L$. The arbitrary choice of $\phi_h(t)$ does not affect the policyholder's risk exposure level, but it does impact the insurer's profit. To maximize the insurer's profit, we must take a further step to determine the value of $\phi_h$ when $t \in \cA_2$. This is achieved by analyzing the profit over a finer partition $\left\{\cB_i\right\}_{i=1}^3$ such that:
$$
\cB_1=\{t \in [0,1]: \ \tilde T(t)<t\}, \ \ 
\cB_2=\{t \in [0,1]: \ \tilde T(t)=t\}, \ \  
\hbox{and} \ \ 
\cB_3=\{t \in [0,1]: \ \tilde T(t)>t\}.
$$

\noindent The insurer's profit in \eqref{InFiner} becomes:
\begin{align}
\label{InFiner2}
V^{In}\left( q_{{h}},h\right) &=\int_0^1\left(F_X^{-1}\right)^\prime(t)\,h(t)\,\One_{\mathcal{A}_1}(t)\,\mathrm{d}t
+\sum_{i=1}^3 \int_0^1\left[\left(F_X^{-1}\right)^\prime(t)-\phi_{h}(t)\right]h(t)\,\One_{\mathcal{A}_2}(t) \cdot  \One_{\cB_i}(t ) \ \mathrm{d}t \nonumber  \\
&= \int_0^1\left(F_X^{-1}\right)^\prime(t)\,h(t)\,\One_{\mathcal{A}_1}(t)\,\mathrm{d}t
+\sum_{i=1}^3 \int_0^1\left[\left(F_X^{-1}\right)^\prime(t)-\phi_{h}(t)\right]h(t)\,\One_{\mathcal{A}_2\cap \cB_i}(t) \ \mathrm{d}t.
\end{align}

\noindent We consider the following three cases:
\begin{enumerate}
\item  On $\cA_2 \cap \cB_1$, $h(t)=t-\tilde T(t)>0$. Thus, the insurer's profit in \eqref{InFiner2} is decreasing in  $\phi_h(t)$, and it is optimal to set $\phi_{h}(t)=0$.

\smallskip

\item  On $\cA_2 \cap \cB_2$, $h(t)=t-\tilde T(t) = 0$. The profit contribution is always zero regardless of the value of $\phi_{h} (t)$. Thus, $\phi_{h} (t)$ can take any value in $\left[0,\left(F_X^{-1}\right)^\prime(t)\right]$ without affecting the insurer's profit.

\smallskip

\item On $\cA_2 \cap \cB_3$, $h(t)=t-\tilde T(t)<0$. The insurer's profit is increasing in $\phi_h(t)$. Thus, it is optimal to set $\phi_{h}(t)=\left(F_X^{-1}\right)^\prime(t)$. 
\end{enumerate}

\medskip

Hence, for a given $h$, the marginal quantile of the optimal retention can be written as:
\begin{equation} 
\label{eq:h}
\left(q^*_{h}\right)^\prime(t)=
\left\{
\begin{array}[c]{ll}%
0, &t\in  \cA_2\cap\cB_1,\vspace{0.2cm}\\
\phi_{h}(t)\in\left[0,\left(F_X^{-1}\right)^\prime(t)\right], & t\in\cA_2\cap\cB_2,\vspace{0.2cm}\\
\left(F_X^{-1}\right)^\prime(t), &t\in \cA_2\cap\cB_3.\vspace{0.2cm}
\end{array}
\right.  
\end{equation}

\noindent It then follows from \eqref{h} and \eqref{eq:h} that:
\begin{equation}\label{h2}
\left(q^*_{h}\right)^\prime(t)=
\left\{
\begin{array}[c]{ll}
0, &t\in\cA_1 \cup \left(\cA_2 \cap \cB_1\right),\vspace{0.2cm}\\
\phi_{h}(t)\in\left[0,\left(F_X^{-1}\right)^\prime(t)\right], & t\in\cA_2 \cap\cB_2,\vspace{0.2cm}\\
\left(F_X^{-1}\right)^\prime(t), &t\in\cA_3\cup\left(\cA_2 \cap\cB_3\right).\vspace{0.2cm}
\end{array}
\right.  
\end{equation}

\noindent Using \eqref{h2} and the fact that $h(t) =0$ on $\cA_2 \cap \cB_2$, the insurer's profit in \eqref{InFiner2} reduces to:
\begin{align*}
V^{In}\left( q^*_{h},h\right)
&= \int_0^1 (F_X^{-1})'(t)\,h(t)
\bigl[ \One _{\cA_1}(t) + \One_{\mathcal A_2\cap\mathcal B_1}(t)
\bigr]\ \mathrm{d}t \\
&=\int_0^1\left(F_X^{-1}\right)^\prime(t)\, h(t) \ \One_{\mathcal{A}_1\cup \big( \mathcal{A}_2\cap \cB_1 \big)}(t) \ \mathrm{d}t , \ \text{since $\cA_1$ and $\cA_2$ are disjoint.}
\end{align*}

\noindent Moreover,
\begin{align*}
\cA_1 \cup \big( \cA_2 \cap \cB_1 \big) 
&= \big( \cA_1  \cup \cA_2  \big) \  \cap \ \big( \cA_1 \cup \cB_1\big). 
\end{align*}

\noindent Since $\{\cB_i\}_{i=1}^3$ forms a partition of $[0,1]$, we have:
$$\mathcal A_1 \cup\mathcal B_1
=
\mathcal B_1 \cup \left(\mathcal A_1 \cap(\mathcal B_2\cup\mathcal B_3)\right).$$

\noindent Combining the above identities yields:
$$\cA_1\cup(\cA_2\cap\cB_1)
=
\left[(\cA_1 \cup \cA_2) \cap\cB_1\right]
\cup
\left[\cA_1 \cap ( \cB_2 \cup \cB_3)\right],$$

\noindent which is a union of disjoint sets. Therefore, 
$$V^{In}(q_h^*,h)
=
\int_0^1 (F_X^{-1})'(t)\,h(t)
\left[
\mathbf 1_{(\cA_1 \cup \cA_2) \cap \cB_1}(t)
+
\mathbf 1_{\cA_1 \cap (\cB_2 \cup \cB_3)}(t)
\right] \ \mathrm{d}t.$$

\noindent The insurer's optimization problem given in \eqref{IN3} thus reduces to the following:
\begin{align}\label{INh}
\max\limits_{h} \ \left\{\int_0^1\left(F_X^{-1}\right)^\prime(t)\, h(t)\left[ \One_{\{h(t)\leq t-\tilde T(t)\}}\One_{\{\tilde T(t)<t\}}+\One_{\{h(t)< t-\tilde T(t)\}}\One_{\{\tilde T(t)\geq t\}}\right]\mathrm{d}t\right\}.
\end{align}

\medskip

Since the integrand depends on $h$ only through the pointwise value $h(t)$, the above problem \eqref{INh} can be solved by maximizing the integrand pointwise. Fix $t_0\in(0,1)$ and consider the auxiliary maximization problem:
$$\max_{y\geq 0} \ m(y;t_0),$$

\noindent where the auxiliary function $m(y; t_0)$ is given by
$$
m(y;t_0) 
:= y \left[ \One_{\{y\leq t_0-\tilde T(t_0)\}}\One_{\{ \tilde T(t_0)<t_0\}} + \One_{\{y<t_0-\tilde T(t_0)\}} \One_{\{ \tilde T(t_0) \geq t_0\}}\right].
$$

\medskip

\noindent For a fixed $t_0$, the function $m(y; t_0)$ is piecewise linear in $y$. Let $y_{t_0}\in\underset{y}\argmax\, m(y;t_0)$. The maximum value of $m(y;t_0)$ is given by
\begin{equation*}
m(y_{t_0};t_0)=
\left\{
\begin{array}[c]{ll}%
t_0-\tilde T(t_0), &\tilde T(t_0)<t_0,\vspace{0.2cm}\\
0,& \tilde T(t_0)\geq t_0,
\end{array}
\right.  
\end{equation*}

\noindent and any maximizer $y_{t_0}$ satisfies:
\begin{equation} 
\label{dual_y}
\begin{cases}
y_{t_0} =  t_0-\tilde T(t_0), \ & \text{ if $\tilde T(t_0)<t_0$}, \\
y_{t_0} \geq  t_0-\tilde T(t_0), \ & \text{ if $\tilde T(t_0) \geq t_0$}. 
\end{cases}
\end{equation}

\noindent for any $t_0\in (0,1)$. We now proceed to the characterization of optimal solutions to \eqref{INh} through the following Lemma.  

\medskip

\begin{lemma}
A function $h^*$ is optimal to \eqref{INh} if and only if for every $t_0 \in (0,1)$, $h^*(t_0)=y_{t_0}$, where $y_{t_0}$ satisfies \eqref{dual_y}.
\end{lemma}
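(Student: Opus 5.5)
The plan is to reduce the statement to a pointwise comparison. Write the objective in \eqref{INh} as
\[
J(h) := \int_0^1 \left(F_X^{-1}\right)^\prime(t)\, m\big(h(t);t\big)\,\mathrm{d}t .
\]
For every admissible $h$ and every $t$ we have $m(h(t);t)\le m(y_t;t)$, by definition of $y_t$ as a maximizer of $m(\cdot\,;t)$. Since $\left(F_X^{-1}\right)^\prime\ge 0$, integrating gives the universal bound
\[
J(h)\le \int_0^1 \left(F_X^{-1}\right)^\prime(t)\,\big(t-\tilde T(t)\big)\One_{\{\tilde T(t)<t\}}\,\mathrm{d}t =: J^* .
\]
The right-hand side is exactly \eqref{eq:exp_profit}.

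For sufficiency, suppose $h^*(t)=y_t$ satisfies \eqref{dual_y} for every $t$. Then the integrand equals its pointwise maximum everywhere, so $J(h^*)=J^*$ and $h^*$ is optimal. I will also check that at least one such $h^*$ corresponds to a genuine pricing distortion, so that the supremum is attained. The natural candidate is $g^*=T$, i.e.\ $h^*(t)=t-\tilde T(t)$; this equals $y_t$ in the first case of \eqref{dual_y} and satisfies $y_t\ge t-\tilde T(t)$ in the second.

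For necessity, suppose $h$ is optimal, so $J(h)=J^*$. Then
\[
\int_0^1\left(F_X^{-1}\right)^\prime(t)\,\big[m(y_t;t)-m(h(t);t)\big]\,\mathrm{d}t=0,
\]
with a nonnegative integrand. Under Assumption \ref{loss} the quantile $F_X^{-1}$ is strictly increasing, and I will use that $\left(F_X^{-1}\right)^\prime>0$ a.e. Hence $m(h(t);t)=m(y_t;t)$ for a.e.\ $t$, which means $h(t)$ is a maximizer of $m(\cdot\,;t)$. Reading off the maximizer set of the piecewise linear $m$ gives \eqref{dual_y}. When $\tilde T(t)<t$, the maximizers are exactly $y=t-\tilde T(t)$: values below give a smaller product, and values above give zero. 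When $\tilde T(t)\ge t$, the maximum is $0$, and it is attained precisely when the indicator vanishes, i.e.\ when $y\ge t-\tilde T(t)$; this is automatic on the region $y\ge 0$ used in the auxiliary problem.

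The main obstacle is passing from ``a.e.\ $t$'' to ``every $t_0$'' as the lemma is phrased. I would handle it in one of two ways. The first is to interpret the statement up to null sets, which is how $h$ enters the integral; this is consistent with the ``for a.e.\ $t$'' qualifiers used elsewhere in the paper. The second is to use regularity: $h(t)=t-1+g(1-t)$ is continuous because $g$ is differentiable, and so is $\tilde T$. Then the sets where \eqref{dual_y} fails are relatively open within $\{\tilde T(t)<t\}$ (via the condition $h\neq t-\tilde T$) and within $\{\tilde T(t)>t\}$ (via $h<t-\tilde T$). A nonempty relatively open subset of these open sets has positive measure, which forces \eqref{dual_y} at every such $t$. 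On $\{\tilde T(t)=t\}$ there is nothing to prove, since the condition is $h(t)\ge 0$. A secondary point is whether positivity of $\left(F_X^{-1}\right)^\prime$ really holds a.e. If it fails on a set of positive measure, the necessity direction holds only on $\{(F_X^{-1})^\prime>0\}$, and I would state it that way.
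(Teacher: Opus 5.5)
Your argument is correct in substance, and your sufficiency half is the paper's: pointwise $m(h(t);t)\le m(y_t;t)$, integrated against $\left(F_X^{-1}\right)^\prime\ge 0$. For necessity you take a different route.

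\textbf{How the two necessity arguments differ.} The paper argues locally and by contradiction. If $h^*$ violates \eqref{dual_y} at $t_0$, it splices $y_t$ into $h^*$ on a neighbourhood of $t_0$ and claims the resulting $\tilde h$ earns strictly more. You instead show the pointwise bound $J^*$ is attained by the admissible choice $g=T$, so any optimal $h$ has $J(h)=J^*$. You then use the equality case of a vanishing integral of a nonnegative function, followed by continuity.

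Your route buys two things. First, you never need the spliced $\tilde h$ to come from a genuine monotone, differentiable pricing distortion, which the paper does not check. Second, you make explicit the continuity of $h(t)=t-1+g(1-t)$. The paper uses this continuity tacitly when it asserts $y_t>h^*(t)$, or $m(h^*(t);t)=0$, on the whole neighbourhood. The paper's local argument avoids exhibiting a global maximizer, but only at the cost of that unchecked feasibility.

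\textbf{First fix: the set $\{\tilde T(t)=t\}$.} There is something to prove here. $h$ can be negative: $g(s)=s^2$ gives $h(t)=-t(1-t)$. So $h(t_0)\ge 0$ is a genuine condition, and if this set is Lebesgue-null your a.e.\ conclusion says nothing there. Your own tool repairs it. The set where \eqref{dual_y} fails is exactly $\{t:\ h(t)<t-\tilde T(t)\}\cup\{t:\ \tilde T(t)<t,\ h(t)>t-\tilde T(t)\}$. This set is open by continuity of $h$ and $\tilde T$, so a boundary point with $h(t_0)<0$ carries a whole violating interval with it.

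\textbf{Second fix: you do not need $\left(F_X^{-1}\right)^\prime>0$ a.e.} Your caveat is justified: a strictly increasing, absolutely continuous quantile can have zero derivative on a fat Cantor set. But suppose the open failure set contains an interval $J$. Then $m(y_t;t)-m(h(t);t)>0$ at every $t\in J$. Also $\int_J\left(F_X^{-1}\right)^\prime(t)\,\mathrm{d}t>0$ by strict monotonicity, together with the absolute continuity of $F_X^{-1}$ that the paper's quantile representation presupposes. So the integrand is positive on a set of positive measure, which is the contradiction. Go straight from openness of the failure set to this contradiction rather than through an a.e.\ statement. This is the same ingredient the paper's perturbation argument relies on implicitly.
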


\begin{proof}
Assume that $h^*$ is a solution to \eqref{INh}. We show that $h^* (t_0)= y_{t_0}$, where $y_{t_0}$ satisfies \eqref{dual_y}. We consider the following two cases. 
\medskip
\begin{enumerate}
\item If $\tilde T(t_0)<t_0,$ we aim to show that $h^*(t_0)=t_0-\tilde T(t_0)$ for any $t_0\in(0,1)$. Suppose for the sake of contradiction that
$$h^*(t_0)\neq t_0-\tilde T(t_0).$$

\noindent Since $\tilde T(t_0)<t_0$ and $\tilde T$ is continuous (being differentiable), we can find an arbitrary small  $\epsilon>0$ such that 
$$
\tilde T(t)<t, \ \ \text{when $t\in(t_0-\epsilon,t_0+\epsilon)$}.
$$

\noindent If $h^*(t_0)<t_0-\tilde T(t_0)$, we have $m(h^*(t_0);t_0)=h^*(t_0)$. Then there exists a function $\tilde h$ such that:
$$
\tilde h(t) =
\begin{cases}
y_{t}>h^*(t), & t\in(t_0-\epsilon,t_0+\epsilon), \\
h^*(t), & t\notin(t_0-\epsilon,t_0+\epsilon),
\end{cases}
$$

\noindent where $y_t$ satisfies \eqref{dual_y}, which implies the following:
\begin{align*}
V^{In}\left( q^*_{\tilde h},\tilde h\right)-V^{In}\left( q^*_{h^*},h^*\right)   
&=\int_{t_0-\epsilon}^{t_0+\epsilon}\left(F_X^{-1}\right)^\prime(t)\, \left[m(y_{t};t)- m\left(h^*(t);t\right) \right]\mathrm{d}t\\
&= \int_{t_0-\epsilon}^{t_0+\epsilon}\left(F_X^{-1}\right)^\prime(t)\, \left[y_t- h^*(t)\right]\mathrm{d}t
>0,
\end{align*}

\noindent since $y_t>h^*(t)$ when $t\in(t_0-\epsilon,t_0+\epsilon)$. Hence, this contradicts the fact that $h^*$ is optimal for problem \eqref{INh}.

\medskip

\noindent If, in contrast, $h^*(t_0)>{t_0}-\tilde T(t_0)$, then $m\left(h^*(t_0);t_0\right)=0$. With the same $\epsilon$, there exists a function $\tilde h$ such that:
$$
\tilde h(t)= \begin{cases}
y_{t}<h^*(t) , &   t\in(t_0-\epsilon,t_0+\epsilon), \\
h^*(t), & t\notin(t_0-\epsilon,t_0+\epsilon).
\end{cases}
$$ 

\noindent Hence, 
\begin{align*}
V^{In}\left( q^*_{\tilde h},\tilde h\right)-V^{In}\left( q^*_{h^*},h^*\right)   
&=\int_{t_0-\epsilon}^{t_0+\epsilon}\left(F_X^{-1}\right)^\prime(t)\, \left[m(y_{t};t)- m\left(h^*(t);t\right)\right]\mathrm{d}t\\
&= \int_{t_0-\epsilon}^{t_0+\epsilon}\left(F_X^{-1}\right)^\prime(t)\, \left(y_t- 0\right)\mathrm{d}t
> 0,
\end{align*}

\noindent which also contradicts the optimality of $h^*$ for \eqref{INh}. Therefore,  $h^*(t_0)=t_0-\tilde T(t_0)$, for any $t_0\in(0,1)$, when $\tilde T(t_0)<t_0$.

\bigskip

\item If $\tilde T(t_0)\geq t_0,$ we aim to show that $ h^*(t_0)\geq t_0-\tilde T(t_0)$, for any $t_0 \in (0,1)$. Suppose for the sake of contradiction that
$$h^*(t_0)<t_0-\tilde T(t_0).$$ 

Then
$$m\left(h^*(t_0);t_0\right) = h^*(t_0)< t_0-\tilde T(t_0) \leq 0.$$

\medskip

There exists an arbitrary small  $\epsilon > 0$ such that $\tilde T(t)\geq t$ on $(t_0-\epsilon,t_0]$ or on $[t_0,t_0+\epsilon)$. First, assume that $\tilde T(t)\geq t$ on  $(t_0-\epsilon,t_0]$. Then there exists a function $\tilde h$  such that:
$$
\tilde h(t)=
\begin{cases}
y_{t}>h^*(t), & t\in(t_0-\epsilon,t_0], \\
h^*(t), & t\notin(t_0-\epsilon,t_0]. 
\end{cases}
$$

\noindent Therefore,
\begin{align*}
V^{In}\left( q^*_{\tilde h},\tilde h\right)-V^{In}\left( q^*_{h^*},h^*\right)   
&=\int_{t_0-\epsilon}^{t_0}\left(F_X^{-1}\right)^\prime(t)\, \left[m(y_{t};t)- m\left(h^*(t);t\right)\right]\mathrm{d}t\\
&= \int_{t_0-\epsilon}^{t_0}\left(F_X^{-1}\right)^\prime(t)\, \left[0- m\left(h^*(t);t\right)\right]\mathrm{d}t
>0,
\end{align*}

\noindent since $m\left(h^*(t);t\right)<0$ when $t\in(t_0-\epsilon,t_0]$, which leads to a contradiction. Hence, 
$$h^*(t_0)\geq t_0-\tilde T(t_0).$$

\noindent Moreover, we can derive a similar result if $\tilde T(t)\geq t$ on  $[t_0,t_0+\epsilon)$. In sum, for any $t_0\in[0,1]$, we obtain:
\begin{equation}
\label{dual_h}
\begin{cases}
h^*(t_0)=t_0-\tilde T(t_0), & \text{ if} \ \tilde T(t_0)<t_0,\vspace{0.2cm}\\
h^*(t_0) \geq t_0-\tilde T(t_0), &  \text{ if} \ \tilde T(t_0)\geq t_0.\vspace{0.2cm}
\end{cases}
\end{equation}
\end{enumerate}

\bigskip

Conversely, assume that $h^*$ satisfies \eqref{dual_h}. We show that $h^*$ is optimal for \eqref{INh}. We first note that the insurer's expected profit under $(q^*_{h^*}, h^*)$ is given by
$$
V^{In}\left( q^*_{h^*},h^*\right) = \int_0^1\left(F_X^{-1}\right)^\prime(t)\,(t-\tilde T(t))\One_{\{\tilde T(t)<t\}} \, \mathrm{d}t. 
$$

\noindent For any other feasible solution $h$, we compare the insurer's expected profit under $h^*$ and $h$. Specifically, from the structure of \eqref{INh}, we have:
\begin{align*}
&V^{In}\left( q^*_{h^*},h^*\right)-V^{In}\left( q^*_{h},h\right)  \\
& \quad 
=\int_0^1\left(F_X^{-1}\right)^\prime(t)\,(t-\tilde T(t))\One_{\{\tilde T(t)<t\}}\mathrm{d}t\\
& \qquad  -\int_0^1\left(F_X^{-1}\right)^\prime(t)\, h(t) \bigg[\One_{\{h(t)\leq t-\tilde T(t)\}}\One_{\{\tilde T(t)<t\}}(t)+\One_{\{h(t)<t-\tilde T(t)\}}\One_{\{\tilde T(t)= t\}}(t) +\One_{\{h(t)<t-\tilde T(t)\}}\One_{\{\tilde T(t)> t\}}(t)\bigg]\mathrm{d}t. 
\end{align*}

\noindent That is, 
\begin{align}\label{InOpt}
V^{In}\left( q^*_{h^*},h^*\right)-V^{In}\left( q^*_{h},h\right) 
&=\int_0^1\left(F_X^{-1}\right)^\prime(t)\,\left[t-\tilde T(t)-h(t)\One_{\{h(t)\leq t-\tilde T(t))\}}\right]\One_{\{\tilde T(t)<t\}}\mathrm{d}t\nonumber\\
&\qquad-\int_0^1\left(F_X^{-1}\right)^\prime(t)\, h(t) \left[\One_{\{h(t)<t-\tilde T(t)\}}\One_{\{\tilde T(t)= t\}}+\One_{\{h(t)<t-\tilde T(t)\}}\One_{\{\tilde T(t)> t\}}\right]\mathrm{d}t.
\end{align}

\noindent Looking at the first term of \eqref{InOpt}, we know that
\begin{align*}
t-\tilde T(t)-h(t)\One_{\{h(t)\leq t-\tilde T(t)\}}&=\left\{
\begin{array}[c]{ll}
t-\tilde T(t),& h(t)>t-\tilde T(t)\vspace{0.2cm}\\
t-\tilde T(t)-h(t),&h(t)\leq t-\tilde T(t),
\end{array}
\right.  
\end{align*}

\noindent which is always nonnegative. For the second term,
$$
-h(t) \One_{\{h(t)<t-\tilde T(t)\}} \
\begin{cases}
=h(t) \One_{\{h(t)<0\}}\geq 0, & \text{if }\tilde T(t)=t,\\
\geq 0, & \text{if } \tilde T(t)>t. 
\end{cases}
$$

\noindent Thus, both terms of \eqref{InOpt} are nonnegative, for all $t\in[0,1]$, implying that the integrand is pointwise nonnegative. Therefore, the difference in profit is nonnegative, i.e.
$$V^{In}\left( q^*_{h^*},h^*\right)\geq V^{In}\left( q^*_{h},h\right).$$

\noindent Hence, $h^*$ maximizes the insurer's profit functional, thereby establishing sufficiency.
\end{proof}

\bigskip

Moreover, when $h^*$ satisfies \eqref{dual_h}, we can characterize the structure of the critical sets $\cA_1 \cup (\cA_2 \cap \cB_1)$, $\cA_2 \cap \cB_2$, and $\cA_3 \cup ( \cA_2 \cap \cB_3)$ that define $q^*_{h^*}$ in \eqref{h2}. First, if $h^*$ satisfies \eqref{dual_h}, we can clearly see that $\cA_1 = \varnothing$. Hence, 
$$\cA_1 \cup \left(\cA_2 \cap\cB_1\right) =\{t \in [0,1]:  t-\tilde T(t)>0\}.$$

\noindent Moreover,
$$\cA_2\cap\cB_2=\{t \in [0,1]:   h^*(t)=t-\tilde T(t)=0\},$$ 
and 
\begin{align*}  
\cA_3 \cup \left(\cA_2 \cap\cB_3\right)
&= \{t \in [0,1]:  h^*(t)>t-\tilde T(t)\} \  \cup \ \{t\in [0,1]: h^*(t)=t-\tilde T(t)<0\}. 
\end{align*}

\noindent Additionally, since $\{ \cB_i\}_{i=1}^3$ forms a partition over $(0,1)$, it follows that
\begin{align*}
\cA_3 
&= (\cA_3 \cap \cB_1) \ \cup \  (\cA_3 \cap \cB_2) \ \cup \  (\cA_3 \cap \cB_3) \\
&= \{t \in [0,1]: h^*(t) > t - \tilde T(t), t > \tilde T(t)\} \\
&\qquad\qquad \cup  \ \{ t\in [0,1]: h^*(t) > t- \tilde T(t) =0\} \\ 
&\qquad\qquad\qquad \cup \{t\in [0,1] : h(t)^* > t - \tilde T(t), \tilde T(t)> t\}. 
\end{align*}

\noindent The first intersection must be empty since $h^*$ satisfies \eqref{dual_h}:
$$
\cA_3 \cap \cB_1 = \varnothing. 
$$

\noindent Hence, the union of $\cA_3$ with $\cA_2 \cap \cB_3$ reduces to the following:
$$\cA_3 \cup \left(\cA_2 \cap\cB_3\right)
=
\{t\in [0,1]: h^*(t)>t-\tilde T(t)=0\} \ \cup \ \{t\in [0,1]:  \tilde T(t)>t\}. $$

\noindent Thus, the function $q^*_{h^*}$ given in \eqref{h2} can be written as follows:
\begin{equation*}
\left(q^*_{h^*}\right)^\prime(t)=
\left\{
\begin{array}[c]{ll}%
0, &\tilde T(t)<t,\vspace{0.2cm}\\
\phi_{h}(t)\in\left[0,\left(F_X^{-1}\right)^\prime(t)\right], &  t\in\{z; \  h^*(z)=z-\tilde T(z)=0\}\\
\left(F_X^{-1}\right)^\prime(t),  &t\in\{z; \  h^*(z)>z-\tilde T(z)=0\} \cup \{ z; \ \tilde T(z)> z \}.\
\end{array}
\right. 
\end{equation*}

\noindent Since $\tilde{g}^*(t)=t-h^*(t)$, the optimal $h^*$  leads to the following optimal pricing distortion function $\tilde g^*$:
\begin{equation}\label{dual_g}
\tilde g^*(t)=
\left\{
\begin{array}[c]{ll}%
\tilde T(t), &\tilde T(t)<t,\vspace{0.2cm}\\
\in\left[0,\tilde T(t)\right], & \tilde T(t)\geq t.
\end{array}
\right.  
\end{equation}

\noindent Consequently, 
\begin{equation}
\tilde g^*(t)=
\left\{
\begin{array}[c]{ll}
\tilde T(t), &\tilde T(t)<t,\vspace{0.3cm}\\
\in\left[\,\sup \left\{z < t ; \ \tilde g^*(z) \right\}, \tilde T(t)\right], & \tilde T(t)\geq t.
\end{array}
\right.  
\end{equation}

\medskip

\noindent We then obtain the optimal retention quantile $q^*$ as a function of the distortion premium function $g^*$, as follows:
\begin{equation*}
\left(q^*_{{g}^*}\right)^\prime(t)=
\left\{
\begin{array}[c]{ll}
0, &\tilde T(t)<t,\vspace{0.2cm}\\
\phi_{ g^*}(t)\in\left[0,\left(F_X^{-1}\right)^\prime(t)\right], & t\in\{z;\ \tilde{g}^*(z)= \tilde T(z)=z\},\vspace{0.2cm}\\
\left(F_X^{-1}\right)^\prime(t), &t\in\{z; \  \tilde{g}^*(z)<\tilde T(z)=z\}\cup\{z; \ \tilde T(z)>z\} .\vspace{0.2cm}
\end{array}
\right.  
\end{equation*} \qed

\bigskip
%=====================================================================
\subsection{Proof of Theorem \ref{CompStatWRA}}
\label{AppProofThCompStatWRA}

Consider two policyholders whose respective distortion functions $T_1$ and $T_2$ satisfy $T_1(t) \leq T_2(t)$ for all $t \in [0,1]$. Then
$$T_1(\mathbb{P}[X > y]) \leq T_2(\mathbb{P}[X > y]), \quad \forall\, y \in [0,M].$$

\medskip

Let $(I_{g^*_1}^*, g_1^*)$ and $(I_{g^*_2}^*, g_2^*)$ denote the corresponding Stackelberg equilibria, where $\kappa_1^*$ and $\kappa_2^*$ denote the marginal indemnity functions of the two policyholders, respectively, and satisfy Corollary~\ref{ThStep2}. Hence, we obtain
$$\kappa^*_1 (y) \leq \kappa^*_2(y), \ \text{for almost every $y \in [0,M]$}.$$

\noindent The above inequality gives
$$
I_{g^*_1}^*(x) \le I_{g^*_2}^*(x), \ \forall x \in [0,M].
$$

\medskip

\noindent The insurer's expected profits for the two policyholders are given by:
$$
V^{In}(I_{g^*_1}^*, g_1^*) 
= \int_0^1 \big(F_X^{-1}\big)^\prime(t)\, [t - \tilde{T}_1(t)]\, \One_{\{\tilde{T}_1(t) < t\}}\, \mathrm{d}t
\ \ \hbox{ and } \ \ 
V^{In}(I_{g^*_2}^*, g_2^*) 
= \int_0^1 \big(F_X^{-1}\big)^\prime(t)\, [t - \tilde{T}_2(t)]\, \One_{\{\tilde{T}_2(t) < t\}}\, \mathrm{d}t,
$$

\noindent where $\tilde T_1$ and $\tilde T_2$ denote the conjugates of $T_1$ and $T_2$, respectively. Since $T_1(t) \leq T_2(t)$ for all $t \in [0,1]$, it follows that $\tilde{T}_1 (t) \geq \tilde{T}_2 (t)$, for all $t \in [0,1]$. Hence,
$$t - \tilde{T}_1(t) \leq t - \tilde{T}_2(t), \ \forall t \in [0,1],$$ 
and,
$$\{\tilde{T}_1(t) < t\} \subseteq \{\tilde{T}_2(t) < t\}.$$

\noindent Therefore, 
\begin{align*}
V^{In}(I_{g^*_1}^*, g_1^*) 
&= \int_0^1 \big(F_X^{-1}\big)^\prime(t)\, [t - \tilde{T}_1(t)]\, \One_{\{\tilde{T}_1(t) < t\}}\, \mathrm{d}t\\
&\leq \int_0^1 \big(F_X^{-1}\big)^\prime(t)\, [t - \tilde{T}_2(t)]\, \One_{\{\tilde{T}_2(t) < t\}}\, \mathrm{d}t
=V^{In}(I_{g^*_2}^*, g_2^*).
\end{align*}

\noindent Hence, the second policyholder is more profitable for the insurer than the first. \qed

%=====================================================================
\bigskip
\subsection{Proof of Proposition \ref{ChaPo}}
\label{AppProofChaPo}

Let $(I^*, \pi^*) \in \mathcal{I}_L \times \mathbb{R}$ be optimal for the problem given in \eqref{ChaPoPro}, and assume for the sake of contradiction that $(I^*, \pi^*)$ is not Pareto optimal. Then there exists a contract $(I, \pi) \in \mathcal{I}_L \times \mathbb{R}$ such that 
$$ 
\rho^{{Pol}}(I, \pi) \leq \rho^{{Pol}}(I^*, \pi^*) \quad \text{and} \quad V^{{In}}(I, \pi) \geq V^{{In}}(I^*, \pi^*),
$$

\noindent with at least one strict inequality. Consequently,
$$
\rho^{{Pol}}(I, \pi) - V^{{In}}(I, \pi) < \rho^{{Pol}}(I^*, \pi^*) - V^{{In}}(I^*, \pi^*),
$$

\noindent which contradicts the optimality of $(I^*, \pi^*)$ for the problem given in \eqref{ChaPoPro}. Hence, the contract $(I^*, \pi^*)$ is Pareto optimal. 

\medskip

Conversely, suppose that the contract $(I^*, \pi^*)$ is Pareto optimal, and assume for the sake of contradiction that $(I^*, \pi^*)$ is not optimal for \eqref{ChaPoPro}. Then there exists a contract $(\tilde{I}, \tilde{\pi}) \in \mathcal{I}_L \times \mathbb{R}$ such that
$$
\rho^{{Pol}}(\tilde{I}, \tilde{\pi}) - V^{{In}}(\tilde{I}, \tilde{\pi}) 
< \rho^{{Pol}}(I^*, \pi^*) - V^{{In}}(I^*, \pi^*).
$$ 

\medskip

\noindent Let $\hat{\pi} := \tilde{\pi} + \rho^{{Pol}}(I^*, \pi^*) - \rho^{{Pol}}(\tilde{I}, \tilde{\pi})$. Then by translation invariance of $\rho^{Pol}$, we have:
$$
\rho^{{Pol}}(\tilde{I}, \hat{\pi}) = \rho^{{Pol}}(X - \tilde{I}(X)) + \hat{\pi} = \rho^{{Pol}}(I^*, \pi^*).
$$

\noindent Moreover, 
\begin{align*}
V^{{In}}(\tilde{I}, \hat{\pi}) - V^{{In}}(I^*, \pi^*) 
&= \hat{\pi} - \mathbb{E}[\tilde{I}(X)] - V^{{In}}(I^*, \pi^*) \\
&= \left( \tilde{\pi} + \rho^{{Pol}}(I^*, \pi^*) - \rho^{{Pol}}(\tilde{I}, \tilde{\pi}) \right) - \mathbb{E}[\tilde{I}(X)] - V^{{In}}(I^*, \pi^*) \\
&= \rho^{{Pol}}(I^*, \pi^*) - \rho^{{Pol}}(\tilde{I}, \tilde{\pi}) + V^{{In}}(\tilde{I}, \tilde{\pi}) - V^{{In}}(I^*, \pi^*) 
> 0. 
\end{align*}

\noindent Thus, the contract $(\tilde{I}, \hat{\pi})$ satisfies
$$
\rho^{{Pol}}(\tilde{I}, \hat{\pi}) = \rho^{{Pol}}(I^*, \pi^*) \quad \text{and} \quad V^{{In}}(\tilde{I}, \hat{\pi}) > V^{{In}}(I^*, \pi^*),
$$

\noindent which contradicts the Pareto optimality of $(I^*, \pi^*)$. Hence, $(I^*, \pi^*)$ is optimal for \eqref{ChaPoPro}. \qed

%=====================================================================
\bigskip
\subsection{Proof of Proposition \ref{SeIsPo}}
\label{AppProofSeIsPo}
Consider a Stackelberg equilibrium $(I^*_{g^*}, g^*)$ inducing a contract $(I^*_{g^*}, \pi^*_{g^*})$, where $\pi^*_{g^*} := \Pi_{g^*}(I^*_{g^*}(X)) = \int I^*_{g^*} \, d g^* \circ \p$. By Proposition \ref{ChaPo}, to show that $(I^*_{g^*}, \pi^*_{g^*})$ is Pareto optimal, it suffices to show that it is optimal for Problem \eqref{ChaPoPro}. By translation invariance of the distortion risk measure, a contract $(I^*, \pi^*)$ is Pareto optimal if and only if the indemnity function $I^*$ solves
$$\min_{I \in \mathcal{I}_L} \left\{ \rho^{{Pol}}(X - I(X)) + \mathbb{E}[I(X)] \right\}.$$

\noindent Equivalently, using retention functions, the above problem becomes:
$$\min_{R \in \mathcal{I}_L} \left\{ \rho^{{Pol}}(R(X)) + \mathbb{E}[X - R(X)] \right\}.$$

\noindent Using $q(t) := F_{R(X)}^{-1}(t)$ for a.e.\ $t \in [0,1]$, the above problem is equivalent to the following:
\begin{align*}
\min_{q \in \mathcal{Q}_L} \left\{ \int_0^1 q'(t)\left[ T(1 - t) - (1 - t) \right]\, \mathrm{d}t + \mathbb{E}[X] \right\}.
\end{align*}

\medskip

\noindent Since $(I^*_{g^*}, g^*)$ is a Stackelberg equilibrium, the optimal quantile $q^*_{g^*}$ characterized in Theorem \ref{OptDis} solves the above minimization problem. Hence, the induced contract $(I^*_{g^*}, \pi^*_{g^*})$ is Pareto optimal. We now show that $(I^*_{g^*}, \pi^*_{g^*})$ is individually rational but leaves the policyholder indifferent. First, note that the insurer's expected profit under $(I^*_{g^*}, \pi^*_{g^*})$ satisfies
\begin{align*}
V^{In}(I^*_{g^*},\pi^*_{g^*})
&=\int_0^1\left(F_X^{-1}\right)^\prime(t)\left[t-\tilde T(t)\right]\,\One_{\{\tilde T(t)<t\}}\,\mathrm{d}t
\geq 0
=V^{In}(0,0).
\end{align*}

\noindent Moreover, for the policyholder, we have
\begin{align*}
\rho^{Pol}(I^*_{g^*},\pi^*_{g^*})-\rho^{Pol}(0,0)
&=\int_0^1\left(F_X^{-1}\right)^\prime(t)\,g^*(1-t)\,\mathrm{d}t+\int_0^1\left(q^*_{g^*}\right)^\prime(t)\,\left[T(1-t)-g^*(1-t)\right] \, \mathrm{d}t\\
&\qquad\qquad-\int_0^1\left(F_X^{-1}\right)^\prime(t)\,T(1-t)\,\mathrm{d}t\\     &=\int_0^1\left[\left(q^*_{g^*}\right)^\prime(t)-\left(F_X^{-1}\right)^\prime(t)\right]\,\left[T(1-t)-g^*(1-t)\right] \, \mathrm{d}t.
\end{align*}

\medskip

\noindent Consider the following three cases:
\medskip
\begin{enumerate}
\item If $\tilde{T}(t) < t$, then by optimality, $\tilde{g}^*(t) = \tilde{T}(t)$, or equivalently $g^*(1 - t) = T(1 - t)$. Hence, 
$\left[\left(q^*_{g^*}\right)^\prime(t)-\left(F_X^{-1}\right)^\prime(t)\right]\,\left[T(1-t)-g^*(1-t)\right] =0$. 

\medskip

\item If $\tilde{T}(t) = t$ and $\tilde{g}^*(t) < \tilde{T}(t)$, then $\left(q^*_{g^*}\right)^\prime(t) = \left(F_X^{-1}\right)^\prime(t)$. 
If $\tilde{T}(t) = t$ and $\tilde{g}^*(t) = \tilde{T}(t)$, then $g^*(1 - t) = T(1 - t)$. 
In both cases, 
$\left[\left(q^*_{g^*}\right)^\prime(t)-\left(F_X^{-1}\right)^\prime(t)\right]\,\left[T(1-t)-g^*(1-t)\right] =0$. 

\medskip

\item If $\tilde{T}(t) > t$, then $\left(q^*_{g^*}\right)^\prime(t) = \left(F_X^{-1}\right)^\prime(t)$, and $\left[\left(q^*_{g^*}\right)^\prime(t)-\left(F_X^{-1}\right)^\prime(t)\right]\,\left[T(1-t)-g^*(1-t)\right] =0$. 
\end{enumerate}

\noindent Consequently, we finally obtain $\rho^{Pol}(I^*_{g^*},\pi^*_{g^*}) = \rho^{Pol}(0,0)$.\qed

%=====================================================================
\bigskip
\subsection{Proof of Proposition \ref{PoIsSe}}
\label{AppProofPoIsSe}
Consider a Pareto optimal contract $(I^*, \pi^*)$, such that $\rho^{Pol}(I^*,\pi^*) = \rho^{{Pol}}(0,0)$. We show that this contract is induced by a Stackelberg equilibrium. First, note that under $(I^*, \pi^*)$, the following holds by translation invariance and comonotonic additivity of $\rho^{Pol}$:  
\begin{align*}
\rho^{Pol}(I^*,\pi^*) 
&= \rho^{Pol}\left(X-I^*(X)\right) + \pi^*
=\rho^{Pol}\left(X\right) - \rho^{Pol}\left(I^*(X)\right)
+ \pi^*.
\end{align*}

\noindent Since, $\rho^{Pol}(I^*,\pi^*) =\rho^{{Pol}}(0,0)=\rho^{{Pol}}(X)$, we have that $\pi^* = \rho^{{Pol}}(I^*(X))$. Moreover, since $(I^*, \pi^*)$ is Pareto optimal, Proposition \ref{ChaPo} implies that the quantile function $q^*(t):= F_X^{-1}(t) - F_{I^*(X)}^{-1}(t)$ satisfies
\begin{equation*}
(q^*)'(t) =
\begin{cases}
0, & T(1 - t) > 1 - t, \\
\phi(t), & T(1 - t) = 1 - t, \\
\left(F_X^{-1}\right)'(t), & T(1 - t) < 1 - t,
\end{cases}
\end{equation*}

\noindent for some measurable function $\phi$ satisfying $ \phi(t) \in[0,\left(F_X^{-1}\right)'(t)].$ 

\medskip

Consider now the pair $(q^*, g^*_{I^*})$, where the pricing distortion function $g^*_{I^*}$ coincides with the policyholder's distortion function $T$. That is,
$$
g^*_{I^*}(t) := T(t), \ \forall \, t \in [0, 1].
$$

\noindent Thus,
$$
\Pi_{g^*_{I^*}} \left( I^*(X) \right) = \int I^*(X)\, \mathrm{d}g^*_{I^*} \circ \mathbb{P}=\int I^*(X)\, \mathrm{d}T \circ \mathbb{P}= \rho^{{Pol}}(I^*(X))=\pi^*.
$$

\medskip

Since $(q^*,g^*_{I^*})$ satisfies the optimality conditions of Theorem \ref{OptDis}, it follows that $(q^*,g^*_{I^*})$ solves the problem given in \eqref{IN3}. Lemma \ref{LemEq2} then implies that the market mechanism $(I^*,g^*_{I^*})$ is a Stackelberg equilibrium. Hence, the Pareto optimal insurance contract $(I^*,\pi^*)$ can be obtained from the Stackelberg equilibrium $(I^*,g^*_{I^*})$, where $\pi^* := \Pi_{g^*_{I^*}} \left( I^*(X) \right)$ . \qed

%============================================
%============================================
%============================================
\vspace{0.8cm}
\bibliographystyle{ecta}
\bibliography{ref}
\vspace{0.4cm}
\end{document}